\DeclareSymbolFont{bbold}{U}{bbold}{m}{n}
\DeclareSymbolFontAlphabet{\mathbbold}{bbold}
\newcommand{\vect}[1]{\mathbbold{#1}}
\newcommand{\vectorones}[1][]{\vect{1}_{#1}}
\newcommand{\argmin}{\operatorname{argmin}}
\newcommand{\Med}{\textup{Med}}
\begin{document}

\setlength{\parskip}{0.3cm}
\setlength\parindent{0em}
\setlist{nolistsep}

\chapter*{Rethinking the Micro-Foundation of Opinion Dynamics: \\Rich Consequences of the Weighted-Median Mechanism}

\vspace{-2cm}{Wenjun Mei*, Francesco Bullo, Ge Chen, Julien M. Hendrickx, Florian D\"{o}rfler}
\vspace{4cm}

\noindent \textbf{Abstract}
\smallskip

To identify the main mechanisms underlying complex opinion formation processes in social systems, researchers have long been exploring simple mechanistic mathematical models. Most existing opinion dynamics models are built on a common micro-foundation, i.e., the weighted-averaging opinion update. However, we argue that this universally-adopted mechanism features a non-negligible unrealistic feature, which brings unnecessary difficulties in seeking a proper balance between model complexity and predictive power. In this paper, we propose the weighted-median mechanism as a new micro-foundation of opinion dynamics, which, with minimal assumptions, fundamentally resolves the inherent unrealistic feature of the weighted-averaging mechanism. Derived from the cognitive dissonance theory in psychology, the weighted-median mechanism is supported by online experiment data and broadens the applicability of opinion dynamics models to multiple-choice issues with ordered discrete options. Moreover, the weighted-median mechanism, despite being the simplest in form, captures various non-trivial real-world features of opinion evolution, while some widely-studied averaging-based models fail to.
\vspace{3cm}

\newpage
\noindent \textbf{MAIN TEXT}
\vspace{1cm}

\section{Introduction}

The key discourse in democratic society starts from exchanges of opinions in deliberative groups, over public debates, or via social media, to eventually reaching consensus or disagreements. Mathematical models help provide mechanistic understandings of how empirically observed macroscopic opinion-formation phenomena emerge from certain microscopic social-influence mechanisms and certain social network structures. Due to the complexity of social interactions, the key challenge in building predictive and mathematically tractable models is to identify the ``salient features'', i.e., the micro-foundations,  that govern the interpersonal influence processes. 

Most existing opinion dynamics models are based upon a common micro-foundation: the weighted-averaging mechanism, also known as the classic \emph{DeGroot model}~\cite{JRPF:56,MHDG:74}. Consider $n$ individuals discussing some issue and each individual $i$'s opinion at time $t$ is denoted by $x_i(t)$. The mathematical form of the DeGroot model is written as
\begin{equation}\label{eq:DeGroot}
  x_i(t+1) = \text{Mean}_i\big(x(t);W\big)=\sum_{j=1}^n w_{ij} x_j(t).
\end{equation}
Here $w_{ij}$ characterizes the influence individual $j$ has on $i$, and the \emph{influence matrix} $W\!=(w_{ij})_{n\!\times\! n}$ induces a directed and weighted graph, referred to as the \emph{influence network} and denoted by $G(W)$, see an example in Fig.~\ref{fig:averaging-WM}(a). Namely, each individual is a node in $G(W)$, and each entry $w_{ij}$ corresponds to a link from $i$ to $j$ with weight $w_{ij}$. By definition,  $w_{ij}\ge 0$ for any $i,\, j$, and $w_{i1}+\dots+w_{in}=1$ for any $i$. The DeGroot model is deceivingly elegant but leads to an overly-simplified prediction that the individuals' opinions reach consensus, i.e., $x_i(t)-x_j(t)\!\to\! 0$ as $t\!\to\! \infty$ for any $i,\,j$, whenever $G(W)$ has a globally reachable and aperiodic strongly connected component~\cite{MHDG:74,AVP-RT:17} (see the Supplementary Section I for a brief review of graph theory). This is a bold conclusion based on mild connectivity conditions. 

The intuition behind DeGroot model's always-consensus behavior is that the weighted-averaging mechanism leads to a non-negligibly unrealistic implication, illustrated via the following simple example and visualized in Fig.~\ref{fig:averaging-WM}(b): Suppose an individual $i$ is influenced by individuals $j$ and $k$ via the weighted-averaging mechanism:
\begin{equation*}
x_i(t+1) = x_i(t) + w_{ik}\big(x_k(t)-x_i(t)\big) + w_{ij}\big(x_j(t)-x_i(t)\big).
\end{equation*}
The equation above implies that whether $i$'s opinion moves towards $x_k(t)$ or $x_j(t)$ is determined by whether $w_{ik}|x_k(t)-x_i(t)|$ is larger than $w_{ij}|x_j(t)-x_i(t)|$. That is, the ``attractiveness'' of opinion $x_j(t)$ to individual $i$ is proportional to the opinion distance $|x_j(t)-x_i(t)|$. Such proportionality implies overly large ``attractive forces'' between distant opinions, which drive the DeGroot model to consensus under mild conditions. Moreover, the notion of opinion distance depends on numerical representation of opinions, which could be arbitrary if the opinions are not numerical by nature.

The weighted-averaging mechanism is widely adopted for its simplicity. However, as indicated by the argument above, such simplicity comes at a cost: In order to explain anything other than consensus, the inherent unrealistic features of weighted averaging have to be first remedied by introducing additional assumptions and parameters that help resist the overly large attractions between distant opinions, see various important extensions of the DeGroot model~\cite{RPA:64,DA-GC-FF-AO:10,NEF-ECJ:90,GD-DN-FA-GW:00,RH-UK:02,PD-AG-DTL:13,TKN-MM-JL:16,NEF-AVP-RT-SEP:16,AVP-RT:17,JL:17,AVP-RT:18,GS-CA-JSB:19}. For instance, Abelson~\cite{RPA:64} assumes that the weights decay with opinion distances. In a more recent paper~\cite{VA-FB-AKS:16b}, individuals with more extreme opinions are assumed to assign more weights to themselves. These modified averaging mechanisms, however, still lead to opinion consensus under mild network connectivity conditions. The Friedkin-Johnsen model~\cite{NEF-ECJ:90} generates disagreement by introducing individuals' persistent attachments to their initial conditions, which resist the attractions by others' opinions. In the biased-assimilation model~\cite{PD-AG-DTL:13}, individuals process weighted averages of others' opinions in a highly nonlinear manner, by weighing confirming evidence more heavily than dis-confirming evidence, which leads to opinion polarization. The bounded-confidence models~\cite{GD-DN-FA-GW:00,RH-UK:02} assume that opinion attractiveness first increases proportionally with opinion distance and is then truncated to zero once the distance exceeds a pre-assumed threshold, which leads to opinion clustering. All the aforementioned models involve additional crucial parameters that need to be identified.

\begin{figure*}[htb]
\centering
\includegraphics[width=0.85\linewidth]{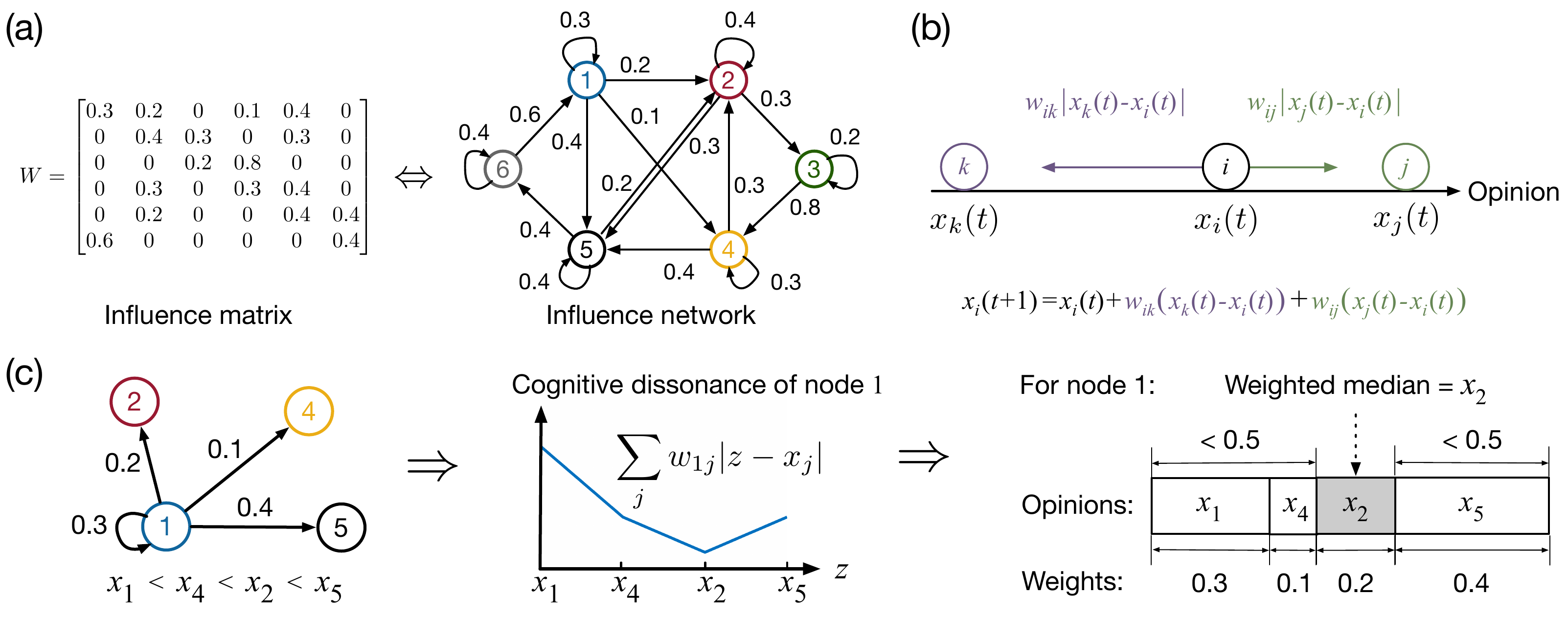}
\caption{Implications of the weighted-averaging and the weighted-median mechanisms. Panel~(a) is an example of a $6\times 6$ influence matrix and the corresponding influence network with 6 nodes. Pandel~(b) illustrates the underlying implication of the weighted-averaging opinion update. Panel~(c) plots the cognitive dissonance function for node 1 in the influence network shown in Panel~(a), following the weighted-median mechanism. Node 1 updates its opinion by first sorting its social neighbors' opinions and picking the one such that the cumulative weights assigned to the opinions on its both sides are less than 0.5.}\label{fig:averaging-WM}
\end{figure*}

In this paper, we resolve the inherent unrealistic features of weighted averaging in a more fundamental way. Instead of further extending the DeGroot model, we propose the \emph{weighted-median mechanism} as an alternative micro-foundation of opinion dynamics, in which opinion attractiveness and opinion distance are not intrinsically coupled. This new mechanism is not proposed arbitrarily, but derived from network games and the cognitive dissonance theory in psychology, and is supported by an online experiment dataset. As will be manifested in later sections, such an inconspicuous change from averaging to median leads to rich consequences. The weighted-median mechanism broadens the applicability of opinion dynamics models to multiple-choice issues with ordered discrete options, e.g., political elections. Moreover, comparative numerical studies indicate that, the weighted-median mechanism, despite being the simplest in form, captures non-trivial real-world features of opinion evolution, while some widely-studied extensions of the DeGroot model fail to, in the following aspects: First, the weighted-median mechanism predicts that consensus is less likely to be achieved in larger groups or groups with more clustered network structure; Second, the weighted-median mechanism predicts that extreme opinion holders tend to reside in peripheral areas of social networks and form into local clusters, which resembles the pattern revealed by a large-scale Twitter dataset; Third, the weighted-median mechanism generates various empirically observed public opinion distributions without deliberately tuning any parameters. In summary, while it is implausible for one single model to explain every aspect of real-world opinion dynamics, the clear sociological interpretations, the empirical evidence, and the realistic model predictions in various aspects support the weighted-median mechanism as a well-founded and expressive micro-foundation of opinion dynamics.

\section{Results and Discussion}

\subsection{Model Derivation and Set-up}

\paragraph{Model derivation} The derivation of the weighted-median mechanism is inspired by network games and the \emph{cognitive dissonance} theory in psychology: Individuals experience cognitive dissonance by disagreeing with others and tend to reduce the dissonance by adjusting their opinions~\cite{LF:1957,DCM-WW:05}. Such dissonance can be mathematically formalized in different ways~\cite{PG-JL-FS:14} and the arguably most parsimonious form is
\begin{align*}
u_i(x_i,x_{-i}) =\! \sum_{j:\, w_{ij}>0}\! w_{ij}|x_i-x_j|^{\alpha},\text{ for any individual }i,
\end{align*} 
where $x_{-i}$ denotes the opinions of all the other individuals except $i$, and $\alpha>0$ is an important model parameter. In this context, individuals' opinion updates can be modeled as the following best-response dynamics: for any $i$,
\begin{equation}\label{eq:best-response-dyn}
x_i(t+1) \in \argmin_{z\in \mathbb{R}} \sum_{j:\, w_{ij}>0} w_{ij} |z-x_j(t)|^{\alpha}.
\end{equation} 
Although it might be overly assertive to claim that such ``dissonance functions'' really exist and are being minimized in human minds, the above framework does help derive opinion-update mechanisms with clear sociological interpretations. For example, due to the convexity of $x^{\alpha}$ for $x\ge 0$, $u_i(x_i,x_{-i})$ with $\alpha\!>\!1$ implies that moving towards a distant opinion reduced more dissonance than moving towards a nearby opinion by the same distance. Namely, distant opinions are more attractive. In particular, $\alpha\!=\!2$ results in the DeGroot model~\cite{DB-JK-SO:15}. On the other hand, $\alpha\!<\!1$ implies that nearby opinions are more attractive. In this letter, we adopt the neutral hypothesis $\alpha\!=\!1$, which does not imply any pre-assumption on how opinion attractiveness is coupled with opinion distance. If necessary, one could incorporate any such coupling by assuming opinion-dependent weights $w_{ij}(x)$, which is a formidable research direction but out of the scope of this letter. In turns out that equation~\eqref{eq:best-response-dyn} with $\alpha=1$ derives the weighted-median mechanism, illustrated in Fig.~\ref{fig:averaging-WM}(c) and formalized below. The detailed derivation is given by Supplementary Section II.2.

\paragraph{Model set-up} The weighted-median model is formalized as a discrete-time stochastic process. Given the influence matrix $W=(w_{ij})_{n\times n}$ and the initial condition $x(0)\in \mathbb{R}^n$, at each time $t+1$, an individual $i$ is randomly activated and update their opinion via the following weighted-median mechanism:
\begin{equation}\label{eq:WM-dyn}
x_i(t+1) = \Med_i\big(x(t);W\big),
\end{equation}
where $\Med_i\big(x(t);W\big)$ denotes the \emph{weighted median} of the n-tuple $x(t) = \big( x_1(t),x_2(t),\dots,x_n(t) \big)$ associated with the weights $(w_{i1},w_{i2}\dots,w_{in})$, i.e., the $i$-th row of the matrix $W$. The value of $\Med_i\big(x(t);W\big)$ is in turn given as follows: $\Med_i\big(x(t);W\big)=x^*\in \mathbb{R}$ if $x^*$ satisfies 
\begin{align*}
\sum_{j:\,x_j<x^*}w_{ij}\le \frac{1}{2},\quad \text{and} \quad \sum_{j:\, x_j>x^*} w_{ij}\le \frac{1}{2}.
\end{align*}
For generic weights $W$, $\Med_i\big(x(t);W\big)$ is unique. Otherwise, let $\Med_i\big(x(t);W\big)$ be the weighted median closest to $x_i(t)$, which again guarantees its uniqueness, see the Supplementary Section II.1 for a detailed discussion.

\paragraph{Broader applicability of the weighted-median model} The weighted-median operator is well-defined as long as opinions are ordered. This prominent feature broadens the applicability of opinion dynamics models to multiple-choice issues with discrete and ordered options, which have not been extensively studied before by quantitative models. Debates and decisions about ordered multiple-choice issues are prevalent in reality. For example, in modern societies, many political issues are evaluated along one-dimension ideology spectra and political solutions often do not lend themselves to a continuum of viable choices. At a fundamental level, the weighted-median mechanism is independent of numerical representations of opinions. Such representations may be non-unique and artificial for any issue where the opinions are not intrinsically quantitative.  Obviously, a nonlinear opinion rescaling leads to major changes in the evolution of the averaging-based opinion dynamics. It is notable that the human mind often perceives and manipulates quantities in a nonlinear fashion, e.g., the perception of probability according to prospect theory~\cite{DK-AT:79}. 

\subsection{Empirical Validation}

The weighted-median mechanism, derived from psychological theory and first principles, is also supported by empirical evidence. Analysis of an online experiment dataset~\cite{CVK-SM-PG-PJR-JMH-VDB:16} indicates that median-based mechanisms enjoy significantly lower errors than averaging-based mechanisms in predicting individuals' opinion shifts under social influence. In each such experiment, 6 anonymous individuals answer 30 questions sequentially within tightly limited time. The questions are guessing the number of dots in a certain color in a given image, see Fig.~\ref{fig:empirical}(a) for one example. For each question, the 6 participants answer for 3 rounds. After each round, they see all the 6 participants' answers anonymously as feedback and possibly alter their own answers. The dataset records the participants' answers in each round of the 30 questions. Such experiment design has several desired features. Firstly, the questions being asked can be considered as judgmental issues, since there is no systematic way to solve them in limited time but subjective guessing. Secondly, since the participants see each other's answers anonymously, the underlying influence network is conceivably all-to-all with uniform weights. Namely, the experiment design rules out any other factor, e.g., prejudice or communication pattern, but focuses on the core comparison between median and average.

\begin{figure*}[htb]
\centering
\includegraphics[width=0.90\linewidth]{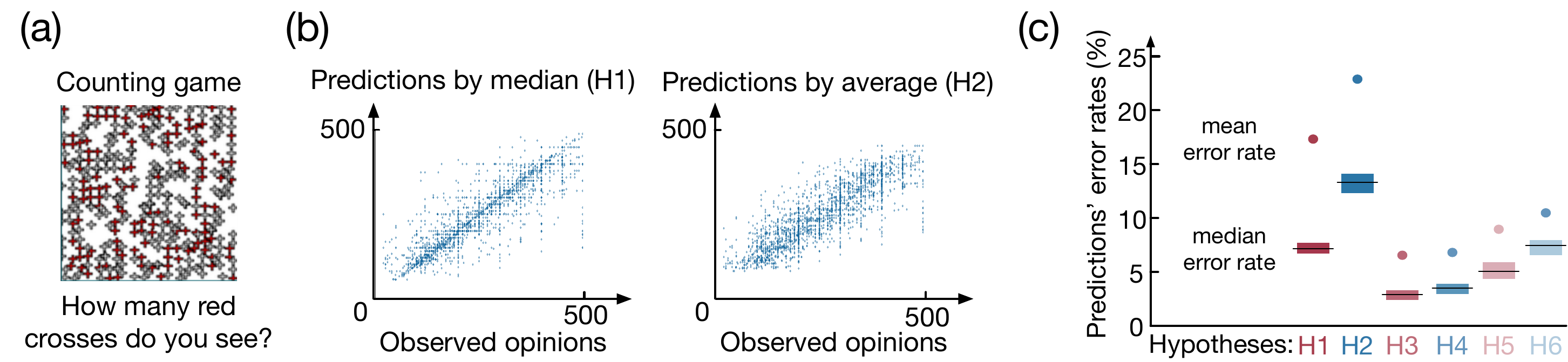}
\caption{Empirical analysis of the experiment dataset~\cite{CVK-SM-PG-PJR-JMH-VDB:16}. Panel~(a) shows an example of the counting game. Panel~(b) shows the scatter plots between the participants' observed 3rd-round answers and the predictions by median (hypothesis H1) and average (hypothesis H2) respectively. Panel~(c) is a visualized presentation of some indicative statistics of hypothese H1-H6's prediction errors. The black bars indicate the medians of prediction error rates for each hypothesis, while the vertical ranges of the colored rectangles are the associated 95\% confidence intervals, computed by the \emph{binomial distribution method}~\cite{MB:15}. The colored dots correspond to the means of the prediction error rates for each hypothesis.}\label{fig:empirical}
\end{figure*}

We randomly sampled 18 experiments from the dataset, in which 71 participants answer all the 30 questions at each round. For each question, we predict the participants' third-round answers based on their second-round answers using the following hypotheses H1-H6 in pairs: Each participant $i$'s answer $x_i(t+1)$ at the $(t+1)$-th round is given by
\hspace{-0.3cm}
\begin{align*}
\textup{H1: }\,\,  x_i(t+1)  &=\textup{Median}\big(x(t)\big);\\
\textup{H2: }\,\,  x_i(t+1)  &=\textup{Average}\big(x(t)\big);\\
\textup{H3: }\,\,  x_i(t+1)  &=\gamma_i(t) x_i(t) + (1-\gamma_i(t))\textup{Median}\big(x(t)\big);\\
\textup{H4: }\,\,  x_i(t+1)  &=\beta_i(t) x_i(t) + (1-\beta_i(t))\textup{Average}\big(x(t)\big);\\
\textup{H5: }\,\,  x_i(t+1)  &=\tilde{\gamma}_i(t) x_i(1) + (1-\tilde{\gamma}_i(t))\textup{Median}\big(x(t)\big);\\
\textup{H6: }\,\,  x_i(t+1)  &=\tilde{\beta}_i(t) x_i(1) + (1-\tilde{\beta}_i(t))\textup{Average}\big(x(t)\big),
\end{align*}
where ``Median'' and ``Average'' means arithmetic median and average of all the six participants' answers, respectively. If there are two arithmetic medians, then Median$(x(t))$ denotes the one closest to $x_i(t)$. Hypothesis H3 (H4 resp.) can be interpreted as the median (averaging resp.) mechanism with ``inertia'', while hypothesis H5 (H6 resp.) can be interpreted as the median (averaging resp.) mechanism with ``prejudice''. For hypotheses H3-H6, the parameters $\gamma_i(t)$, $\beta_i(t)$, $\tilde{\gamma}_i(t)$ and $\tilde{\beta}_i(t)$ are estimated by least-square linear regression based on the participants' answers in the first 20 questions as the training set. Then these estimated parameters are used to predict their answers in the remaining 10 questions. 

Using the above method for $t=2$, we obtain 71$\times$30 = 2130 predictions of the participants' 3rd-round answers by each of H1 and H2, and 71$\times$10=710 predictions by each of H3-H6. Fig.~\ref{fig:empirical}(b) shows the scatter plots between the observed answers and the predictions by H1 and H2. We compute the error rate for each prediction by H1-H6 as follows:
\begin{align*}
\text{error rate} = \frac{\,|\,\text{predicted value}-\text{observed value}\,|\,}{\text{observed value}}
\end{align*}
Some indicative statistics of the prediction error rates for H1-H6 are visualized in Fig.~\ref{fig:empirical}(c) and are presented in details in Supplementary Fig.~1, according to which the median error rate of the predictions by median (H1) is 46.36\% lower than that of the predictions by average (H2). In addition, for each pair of hypotheses, the median-based mechanism bears lower median (and also mean) prediction error rate than the average-based counterpart. Notably, hypotheses H3 and H4 achieve remarkably low prediction errors by introducing individual inertia as additional parameters. Despite being useful for fitting the models, these parameters do not reflect intrinsic attributes of the individuals, nor are they stable over time. Hence, we refrain from such extensions and focus on the core issue, namely mean v.s. median. In addition, we also predict the participants' opinion shifts from the first round to the second round of each question. The results yield quantitatively similar conclusions, see the Supplementary Fig.~1. 

\subsection{Comparative numerical studies}

\begin{figure}[htb]
\centering
\includegraphics[width=1\linewidth]{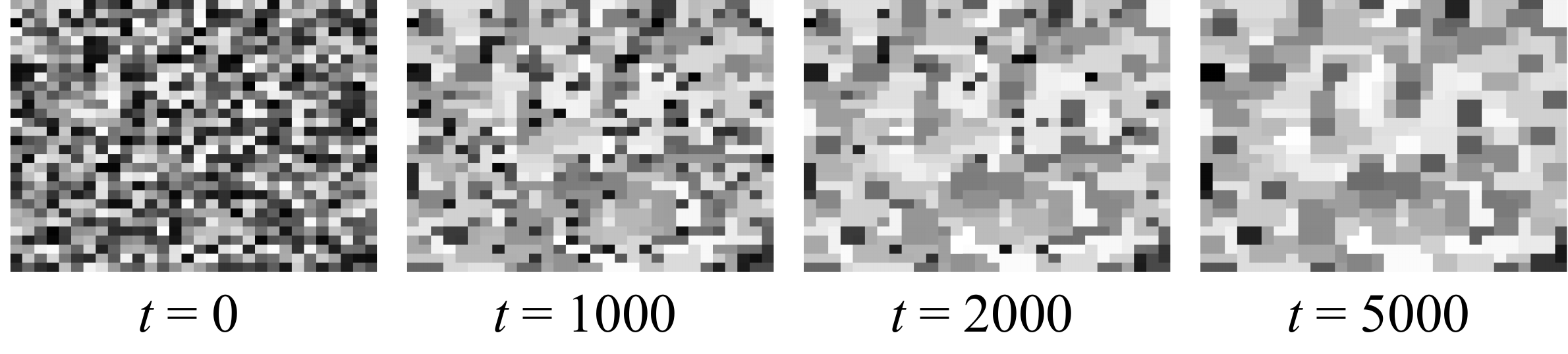}
\caption{One simulation of the weighted-median model on a $30\times 30$ lattice graph. Each block is an individual and is bilaterally connected with all their adjacent blocks (not including the diagonally adjacent blocks). Each individual has a self loop and uniformly assigns weights to all their neighbors including themself. Individuals' initial opinions are independently randomly generated according to the uniform distribution on $[-1,1]$. The grayscale of each block is proportional to the absolute value of the individual's final opinion, i.e., their degree of extremeness. After 5000 time steps, the evolution reaches an equilibrium.}\label{fig:eg-lattice}
\end{figure}

Fig.~\ref{fig:eg-lattice} shows a typical evolution of the weighted-median model on a lattice graph, from which some immediate observations can be obtained. First, unlike the DeGroot model, individuals in the weighted-median model do not always reach consensus but usually form into different opinion clusters. Second, most of the extreme opinion holders (i.e., the dark grey blocks), initially scattered uniformly in the lattice, gradually convert to more moderate opinions. Namely, the typical effect of social influence on moderating the opinions of individuals in groups are still present but not overly strong as in the DeGroot model. 

Further insights revealed by the weighted-median model are to be presented in the rest of this section. Particularly, we compare the behavior of the weighted-median model with some widely-studied extensions of the DeGroot model, including the Friedkin-Johnsen model~\cite{NEF-ECJ:90}, the biased-assimilation model~\cite{PD-AG-DTL:13}, and the networked bounded-confidence model~\cite{RP-MF-BT:19}, all with randomized model parameters. Their mathematical forms and simulation set-ups are provided in the Methods section. 

\paragraph{Consensus probability} Since the weighted-median mechanism resolves the overly large attractions between distant opinions, the effects of network structures on generating persistent disagreement naturally emerge. We investigate how the group size and the clustering coefficient of the underlying influence network affect a group's probability of reaching consensus. We simulate different models on Watts-Strogatz small-world networks~\cite{DJW-SHS:98}, whose structure is tuned by three model parameters: the network size $n$, the average degree $d$, and the \emph{rewiring probability} $\beta$. Specifically, the smaller $\beta$, the more clustered the network is. For the simulation results shown in Fig.~\ref{fig:consensus_prob}(a)-(c), we fix the rewiring probability as $\beta=1$ and estimate how the probability of reaching consensus changes with the network size $n$, under various fixed values of the average degree $d$. For the simulation results shown in Fig.~\ref{fig:consensus_prob}(d)-(f), we fix the network size as $n=30$ and estimate how the probability of reaching consensus changes with the rewiring probability $\beta$, under various fixed values of the average degree $d$. For each model and network set-up, the consensus probability is estimated over 5000 independent simulations. As indicated by Fig.~\ref{fig:consensus_prob}(a)(d), in the weighted-median model, consensus is less likely to be achieved in larger or more clustered networks. This feature is consistent with previous empirical studies~\cite{APH:52,YY-MA:01} and even everyday experience. Predictions by other models are shown in Fig.~\ref{fig:consensus_prob}(b)(c)(e)(f): The Friedkin-Johnsen model almost surely leads to non-consensus; The biased assimilation model and the networked bounded-confidence model capture the decreasing of consensus probability with network size, but does not show clear patterns regarding the relation between consensus probability and clustering coefficient.

\begin{figure*}[htb]
\begin{center}
\includegraphics[width=1\linewidth]{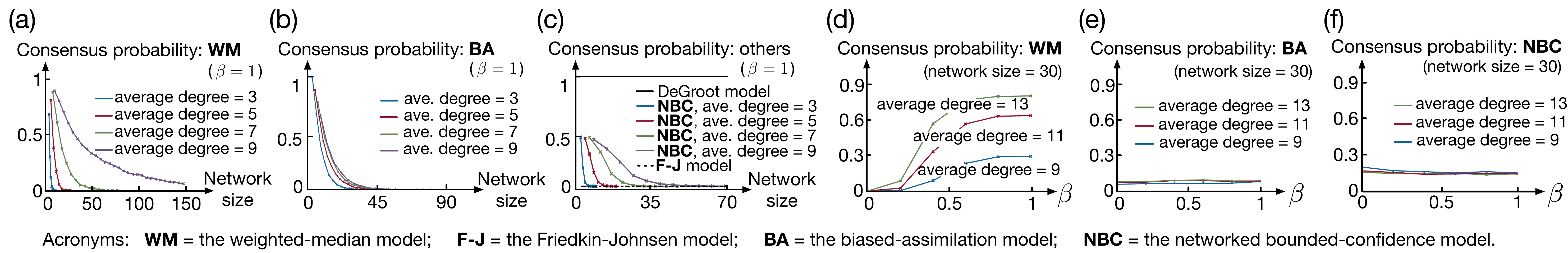}
\caption{Different models' predictions on how consensus probability depends on network size and clustering coefficient. These models are simulated on Watts-Strogatz small-world networks~\cite{DJW-SHS:98}. In Panel~(a)-(c), we fix the average individual degree $d$ and the rewiring probability $\beta$, and plot how the consensus probability changes with the network size $n$. Panel~(a) presents the predictions by the weighted-median model when $\beta=1$. For other values of $\beta$, the results are qualitatively similar, e.g., see Supplementary Fig.~2 for the results when $\beta=0.3$. In Panel~(d)-(f), we fix $n$ and $d$, and plot how the consensus probability changes with $\beta$. Panel~(d) presents the predictions by the weighted-median model when $n=30$. For other values of $n$, the results are qualitatively similar, e.g., see Supplementary Fig.~2 for $n=20$. Since the DeGroot and the Friedkin-Johnsen models lead to trivial predictions of either almost-sure consensus or almost-sure disagreement, their curves are not plotted in Panel~(d)-(f). 
}
\label{fig:consensus_prob}
\end{center}
\end{figure*} 

\paragraph{Locations of extreme opinions} From Fig.~\ref{fig:eg-lattice}, one could already see that extreme opinions in the lattice graph behave differently than moderate opinions. To further investigate how extreme opinions are distributed in social networks, we simulate different models for 100 times independently on randomly generated scale-free networks~\cite{ALB-RA:99} with 5000 nodes. The initial opinions are uniformly randomly generated from $[-1,1]$ and opinions are classified into 4 categories, see Fig.~\ref{fig:extreme-peripheral}(a). We estimate the in-degree centrality distributions for individuals holding different categories of opinions at the steady states of each simulation. As Fig.~\ref{fig:extreme-peripheral}(b) indicates, only in the weighted-median model, the in-degree distribution curves for different categories of opinions are clearly separated, and, moreover, the curve for extreme opinions decays the fastest as in-degree increases. That is, only the weighted-median model shows that extreme opinions tend to reside in peripheral areas of social networks. This feature is consonant with previous empirical, conceptual, and case studies~\cite{JW-JW-JG-TM:74,CM-SM:08,JRH-AKW:12,ECH:13,ETM-RWM:14,SLP-MJG-HM-MF-MVE:15}, which explain opinion radicalization via social-influence processes and identify social marginalization as a key cause. Such a connection has barely been captured by quantitative opinion dynamics models and the weight-median mechanism provides perhaps the simplest explanation for it. To avoid the risk of bias due to the higher probability of being absolutely stubborn (self-weight $>1/2$) in the weighted-median model when the in-degree is small, we perform a second experiment on graphs without self-weights, and obtained similar results, see Supplementary Fig.~4. Simulations for closeness and between centrality or for different categorizations of opinions, also lead to similar results and are presented in Supplementary Fig.~3 and Supplementary Fig.~5. 

\begin{figure*}[htb]
\begin{center}
\includegraphics[width=1\linewidth]{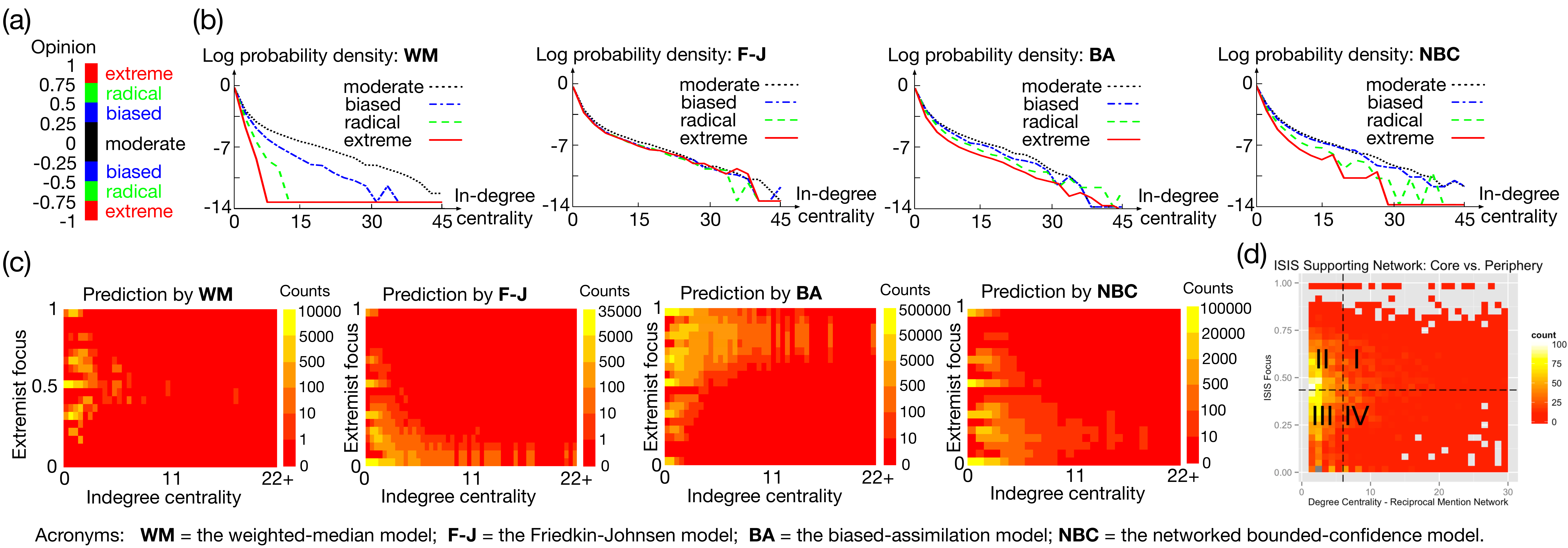}
\caption{Distributions of extreme opinions predicted by different models. Panel~(a) is the categorization of opinions. Panel~(b) shows different models' predictions on the in-degree centrality distributions for individuals holding different categories of opinions at the steady states. Panel~(c) shows different models' predictions on the two-dimensional distributions, i.e., the in-degree and the extremist focus, for the extreme opinion holders at steady states. In each heat map, the last column ``22+'' records the number of extreme individuals with in-degrees larger than or equal to 22. Panel~(d) is Figure 5 in~\cite{MCB-KJ-MC:17}, licensed under Creative Commons CC0 public domain dedication (CC0 1.0). This figure plots the empirical distribution of randomly sampled Twitter users over the in-degree and the ISIS focus (the ratio of one's pro-ISIS social neighbors).
}
\label{fig:extreme-peripheral}
\end{center}
\end{figure*}  

For each model in comparison, we further simulate them on a scale-free network with 2000 nodes for 1000 times independently. To avoid the trivial cases that some individuals might stick to extreme opinions just because they have self loops with weights larger than 1/2, the simulations are conducted on networks without self loops. For extreme opinion holders at the steady states, we compute their \emph{extreme foci}, i.e., the ratios of their neighbors also holding extreme opinions, and plot their two-dimensional distributions over the in-degree and the extreme focus, see the heat maps in Fig.~\ref{fig:extreme-peripheral}(c). The heat map generated by the weighted-median model exhibits a clearly distinct pattern to those generated by the other models: In the weighted-median model, extreme opinion holders tend to have low in-degrees and their extreme foci concentrate around the value 0.5, which implies that they form into local clusters in peripheral areas of the networks. This observation indicates a mechanistic explanation for opinion radicalization among socially marginalized individuals: In social networks, some local clusters are formed by individuals with low centrality, which usually implies few social contacts. Inside those local clusters, if extreme opinions constitute the ``mainstream'', i.e., the weighted-median opinions, individuals will adhere to extreme opinions by yielding to social influence, due to the overwhelming social pressure and lack of diverse information sources. 
Remarkably, the heat map generated by the weighted-median model impressively resembles a real dataset of the network among randomly sampled Twitter users, in which some users have their accounts suspended for posting pro-ISIS terrorism contents and are considered as extreme opinions holders, see Fig.~\ref{fig:extreme-peripheral}(d). 

\begin{figure*}
\centering
\includegraphics[width=1\linewidth]{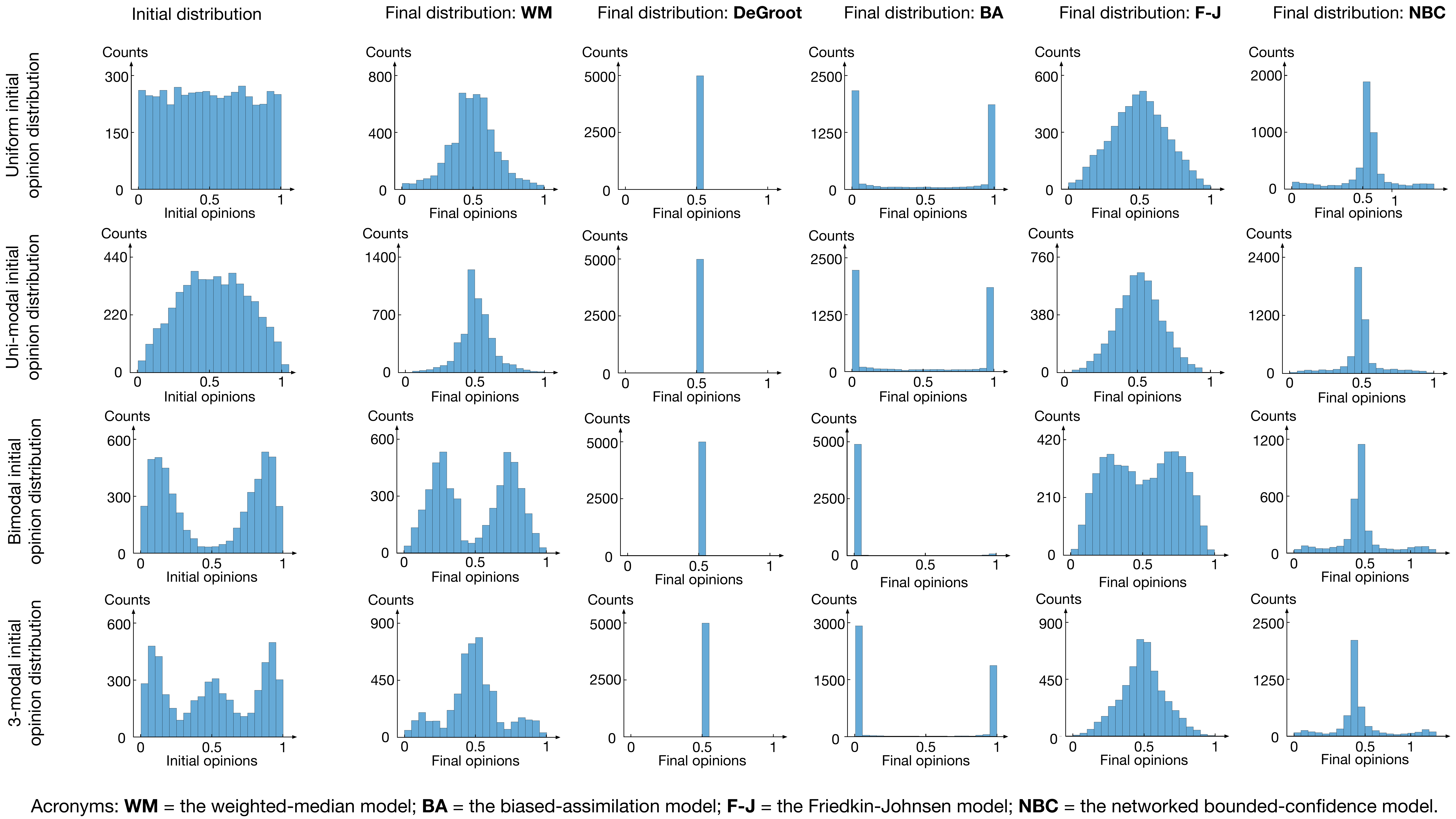}
\caption{Distributions of the initial opinions and the final opinions predicted by different models. All
simulations are run on the same scale-free network with
5000 nodes and starting with the same randomly generated initial conditions.}\label{fig:initial-final-SF}
\end{figure*}

\paragraph{Steady public opinion distributions} Empirical evidence suggests public opinions do not only achieve persistent disagreement, but also form into certain steady distributions~\cite{JL:17,KJ-JMB-JG-DS-PM:19}. In fact, it has long been an open problem what mathematical models naturally lead to the emergence of various empirically observed steady public opinion distributions~\cite{NEF:15}. By simulating different models on a randomly generated scale-free network with 5000 nodes, we compare their predictions on the final steady opinion distributions, starting from various initial opinion distributions. Fig.~\ref{fig:initial-final-SF} shows a set of typical simulation results. Among all the models in comparison, only the weighted-median model, without deliberately tuning any model parameters, naturally generates various types of empirically-observed steady distributions of public opinions. Comparisons conducted on a small-world network~\cite{DJW-SHS:98} indicate similar conclusions and are provided in Supplementary Fig.~6. Namely, the weighted-median model provides perhaps the simplest explanation of the famous Abelson's diversity puzzle~\cite{RPA:64}: \textit{“Since universal ultimate agreement is an ubiquitous outcome of a very broad class of mathematical models, … what on earth one must assume in order to generate the bimodal outcome of community cleavage studies.”}   

\subsection{Conclusions and Future Research Directions}

To sum up, with minimal assumptions, the weighted-median mechanism resolves the unrealistic proportionality between opinion attractiveness and opinion distance implied by the widely-adopted weighted-averaging mechanism. Despite its simplicity in form, the weighted-median mechanism leads to higher accuracy in quantitatively predicting individuals' opinion shifts in an online experiment and captures various interesting real-world phenomena. While it is implausible for one single model to explain every aspect of real-world opinion evolution, all the aforementioned features support the weighted-median mechanism as a well-founded and expressive micro-foundation of opinion dynamics, especially for multiple-choice issues with discrete and ordered options.

A major limitation of the weighted-median mechanism is that no new opinion is created during the opinion updates. Therefore, it does not capture the behavior that individuals compromise at intermediate opinions. This limitation could be resolved by assuming that individuals move towards instead of directly taking their weighted-median opinions. In addition, one could make the model more realistic by considering state-dependent weights, e.g., by assuming that individuals with more extreme opinions become more stubborn, as in~\cite{VA-FB-AKS:16b}. Moreover, many non-trivial extensions introduced to the classic DeGroot model can also be incorporated into the weighted-median mechanism, e.g., the presence of antagonistic relations~\cite{GS-CA-JSB:19}, individual prejudice~\cite{NEF-ECJ:90}, the logical constraints among issues~\cite{NEF-AVP-RT-SEP:16}, and the issue alignments~\cite{FB-PLS-IMS-MS:20,FB-PLS-IMS-MS:21}. In addition, rigorous analysis of the dynamic behavior, e.g., convergence and graph-theoretic conditions for consensus/disagreement, of the weighted-median model and its variations would also be of important theoretical value.

\section{Methods}

Simulations in this paper are conducted via MATLAB. In this section, we provide the information needed to replicate the simulation results presented in this paper. 

\subsection{Models in comparison}

Besides the weighted-median model, we also simulate the Friedkin-Johnsen model~\cite{NEF-ECJ:90}, the biased-assimilation model~\cite{PD-AG-DTL:13}, and the networked bounded-confidence model~\cite{RP-MF-BT:19}, their mathematical formalization and parametrization are as follows.

The Friedkin-Johnsen model~\cite{NEF-ECJ:90} assumes that individuals have persistent attachments to their initial opinions in their opinion updates. The mathematical form is written as 
\begin{align*}
x_i(t+1) = (1-a_i) \sum_j w_{ij}x_j(t) + a_i x_i(0)
\end{align*} 
with $a_i\in [0,1]$ for any individual $i$. Here the parameter $a_i$ characterizes individual $i$'s tendency of attaching to their initial opinion. In our simulations, each $a_i$ is independently randomly generated from the uniform distribution on $[0,1]$.

The biased-assimilation model~\cite{PD-AG-DTL:13} is based on the idea that individuals weight confirming evidence more heavily than  disconfirming evidence. The model is formalized as follows:
$$
\hspace{-0.13cm}x_i(t+1) = \frac{w_{ii}x_i(t)+x_i(t)^{b_i} s_i(t)}{\, w_{ii}x_i(t) + x_i(t)^{b_i} s_i(t) + (1-x_i(t))^{b_i} (d_i-s_i(t)) \,},
$$
where $s_i(t)= \sum_j w_{ij}x_j(t)$, $d_i = \sum_j w_{ij}$, and $b_i\ge 0$ is an individual parameter characterizing how biased individual $i$ is. If $b_i=0$, then individual $i$ process others' opinions according to exactly the DeGroot model; The larger $b_i$, the more heavily individual $i$ tends to weigh confirming evidence relative to disconfirming evidence, i.e., the more biased $i$ is. Since theoretical analysis indicates that $b_i=1$ is somehow a critical threshold~\cite{PD-AG-DTL:13}, we assume that each $b_i$ is independently randomly generated from the uniform distribution on $[0,2]$.   

The networked bounded-confidence model~\cite{RP-MF-BT:19} assumes that individuals are embedded on an influence network but are only influenced by those whose opinions are within certain prescribed \emph{confidence radii} from their own opinions. Its mathematical form is given as follows
\begin{align*}
x_i(t+1) = \frac{\sum_{j\in N_i:\, |x_j(t)-x_i(t)|<r_i} \,\,w_{ij} x_j(t)}{\sum_{j\in N_i:\, |x_j(t)-x_i(t)|<r_i} \,\,w_{ij}}, 
\end{align*}
for any $i$, where $N_i=\{j\,|\, w_{ij}>0\}$. Here $r_i$ is the confidence radius for individual $i$. In our simulations, if the initial opinions are generated from the range $[0,1]$ ($[-1,1]$ resp.), then the individual confidence radii are independently randomly generated from the uniform distribution on $[0,0.5]$ ($[0,1]$ resp.).

\subsection{Generation of initial opinions}

For the simulation results presented in Fig.~\ref{fig:consensus_prob} and~\ref{fig:extreme-peripheral}, the initial opinion of each individual in each simulation is independently randomly generated from the uniform distribution on $[-1,1]$. Regarding the simulation results presented in Fig.~\ref{fig:initial-final-SF}, different initial opinion distributions are generated as follows:
\begin{enumerate}
\itemsep0em
\item Regarding the uniform distribution, we let the initial opinion of each individual be independently randomly sampled from the uniform distribution on $[0,1]$, i..e, $x_i(0)\sim \textup{Unif}[0,1]$ for any $i\in \{1,\dots, n\}$;
\item Regarding the uni-modal distribution, we let the initial opinion of each individual be independently randomly sampled from the Beta distribution $\textup{Beta}(2,2)$;
\item Regarding the bimodal distribution, each individual $i$'s initial opinion is independently generated in the following way: Firstly we generate a random sample $Y$ from the Beta distribution $\textup{Beta}(2,10)$, and then let $x_i(0)=Y$ or $1-Y$ with probability 0.5 respectively;
\item Regarding the 3-modal distribution, each individual $i$'s initial opinion is independently generated in the following way: Firstly we generate two random samples $Y$ and $Z$ from $\textup{Beta}(2,17)$ and $\textup{Beta}(12,12)$ respectively, and then let $x_i(0)$ be $Y$, $1-Y$, or $Z$ with probabilities 0.33, 0.33, and 0.34 respectively. 
\end{enumerate} 
For each initial opinion distribution, we randomly generate the initial opinion of each individual independently and let the models in comparison start with the same initial condition.

\subsection{Generation of random graphs}

Simulations in this paper are conducted on either the scale-free networks or the small-world networks. All these networks consist of only bilateral links. The scale-free networks are generated according to the Barab\'{a}si-Albert preferential attachment model~\cite{ALB-RA:99}. The network construction process starts with an initial seed network, which is set as a graph with 5 nodes and with the link set $\big{\{} \{1,2\}, \{1,5\}, \{2,3\}, \{3,4\},\{4,\\5\} \big{\}}$. Whenever a new node is added to the network, two bilateral links are built according to the preferential-attachment rule. The process terminates when the number of nodes, i.e., the network size, meets the prescribed value $n$. The small-world networks are generated according to the Watts-Strogatz random graph model~\cite{DJW-SHS:98}, which involves three parameters: the network size $n$, the average degree $d$ ($d<(n-1)$), and the rewiring probability $\beta$ ($\beta\in [0,1]$).

Once a random graph is constructed via either of the above methods, unless specified, self loops are added to each node. Then each link is assigned a weight independently randomly sampled from the uniform distribution on $[0,1]$. Then the link weights are normalized so that the weights of each node's out-links (including the self loop if any) sum up to one. Namely, the corresponding adjacency matrix $W$ always satisfies $\sum_{j=1}^n w_{ij}=1$ for any $i$.

\subsection{Determination of convergence and consensus}

In the simulations of different opinion dynamics models, we adopt the following numerical criteria to determine whether a model has reached a steady state or whether the group of individuals has reached consensus. Let $t$ be the index for the iteration time step in the simulations, and let $x(t)=\big( x_1(t),x_2(t),\dots, x_n(t) \big)$ be the opinions of all the individuals after the $t$-th iteration.

For the weighted-median model, starting from $t=1$, whenever $t$ satisfies ``$t$ mod $n$ = 0'', we check whether
\begin{align*}
\lVert x(t) - x(t-n) \rVert_1 < 0.001,
\end{align*}
where $\lVert \cdot \rVert$ denotes the 1-norm of $n$-dimension vectors. If the above inequality holds for 10 consecutive checkpoints, then the model is considered as having reached a steady state and the iteration terminates. For the other models in comparison, if the inequality
\begin{align*}
\lVert x(t) - x(t-1) \rVert_1 < 0.001
\end{align*}
holds for 1000 consecutive time steps, then the model is considered as having reached a steady state and the iteration terminates.

If a model reaches a steady state at iteration time step $T$, then the individuals' opinions are considered as having reached consensus if the following inequality holds:
\begin{align*}
\left\lVert x(T) - \text{mean}\big(x(T)\big) \vectorones[n] \right\rVert_1 < 0.001,
\end{align*}
where $\vectorones[n]$ is the $n$-dimension vector with all entries equal to 1.

\newpage
\bibliographystyle{plain}
\bibliography{alias,WM,Main,new}

\begin{thebibliography}{10}

\bibitem{RPA:64}
R.~P. Abelson.
\newblock Mathematical models of the distribution of attitudes under
  controversy.
\newblock In N.~Frederiksen and H.~Gulliksen, editors, {\em Contributions to
  Mathematical Psychology}, volume~14, pages 142--160. Holt, Rinehart, \&
  Winston, 1964.

\bibitem{DA-GC-FF-AO:10}
D.~Acemoglu, G.~Como, F.~Fagnani, and A.~Ozdaglar.
\newblock Opinion fluctuations and disagreement in social networks.
\newblock {\em Mathematics of Operation Research}, 38(1):1--27, 2013.

\bibitem{VA-FB-AKS:16b}
V.~Amelkin, F.~Bullo, and A.~K. Singh.
\newblock Polar opinion dynamics in social networks.
\newblock {\em IEEE Transactions on Automatic Control}, 62(11):5650--5665,
  2017.

\bibitem{ALB-RA:99}
A.-L. Barab{\'a}si and R.~Albert.
\newblock Emergence of scaling in random networks.
\newblock {\em Science}, 286(5439):509--512, 1999.

\bibitem{FB-PLS-IMS-MS:20}
F.~Baumann, P.~Lorenz-Spreen, I.~M. Sokolov, and M.~Starnini.
\newblock Modeling echo chambers and polarization dynamics in social networks.
\newblock {\em Physical Review Letters}, 124(4):048301, 2020.

\bibitem{FB-PLS-IMS-MS:21}
F.~Baumann, P.~Lorenz-Spreen, I.~M. Sokolov, and M.~Starnini.
\newblock Emergence of polarized ideological opinions in multidimensional topic
  spaces.
\newblock {\em Physical Review X}, 11(1):011012, 2021.

\bibitem{MCB-KJ-MC:17}
M.~C. Benigni, K.~Joseph, and K.~M. Carley.
\newblock Online extremism and the communities that sustain it: Detecting the
  isis supporting community on twitter.
\newblock {\em PloS one}, 12(12):e0181405, 2017.

\bibitem{DB-JK-SO:15}
D.~Bindel, J.~Kleinberg, and S.~Oren.
\newblock How bad is forming your own opinion?
\newblock {\em Games and Economic Behavior}, 92:248--265, 2015.

\bibitem{MB:15}
M.~Bland.
\newblock {\em An Introduction to Medical Statistics}.
\newblock Oxford University Press, 2015.

\bibitem{PD-AG-DTL:13}
P.~Dandekar, A.~Goel, and D.~T. Lee.
\newblock Biased assimilation, homophily, and the dynamics of polarization.
\newblock {\em Proceedings of the National Academy of Sciences}, 2013.
\newblock published ahead of print March 27, 2013.

\bibitem{GD-DN-FA-GW:00}
G.~Deffuant, D.~Neau, F.~Amblard, and G.~Weisbuch.
\newblock Mixing beliefs among interacting agents.
\newblock {\em Advances in Complex Systems}, 3(1/4):87--98, 2000.

\bibitem{MHDG:74}
M.~H. DeGroot.
\newblock Reaching a consensus.
\newblock {\em Journal of the American Statistical Association},
  69(345):118--121, 1974.

\bibitem{LF:1957}
L.~Festinger.
\newblock {\em A Theory of Cognitive Dissonance}.
\newblock Stanford University Press, 1957.

\bibitem{JRPF:56}
J.~R.~P. {French~Jr.}
\newblock A formal theory of social power.
\newblock {\em Psychological Review}, 63(3):181--194, 1956.

\bibitem{NEF:15}
N.~E. Friedkin.
\newblock The problem of social control and coordination of complex systems in
  sociology: {A} look at the community cleavage problem.
\newblock {\em {IEEE} Control Systems}, 35(3):40--51, 2015.

\bibitem{NEF-ECJ:90}
N.~E. Friedkin and E.~C. Johnsen.
\newblock Social influence and opinions.
\newblock {\em Journal of Mathematical Sociology}, 15(3-4):193--206, 1990.

\bibitem{NEF-AVP-RT-SEP:16}
N.~E. Friedkin, A.~V. Proskurnikov, R.~Tempo, and S.~E. Parsegov.
\newblock Network science on belief system dynamics under logic constraints.
\newblock {\em Science}, 354(6310):321--326, 2016.

\bibitem{CMG-JLS:97}
C.~M. Grinstead and J.~L. Snell.
\newblock {\em Introduction to Probability}.
\newblock American Mathematical Society, 1997.

\bibitem{PG-JL-FS:14}
P.~Groeber, J.~Lorenz, and F.~Schweitzer.
\newblock Dissonance minimization as a microfoundation of social influence in
  models of opinion formation.
\newblock {\em Journal of Mathematical Sociology}, 38:147--174, 2014.

\bibitem{JRH-AKW:12}
J.~R. Halverson and A.~K. Way.
\newblock The curious case of colleen larose: Social margins, new media, and
  online radicalization.
\newblock {\em Media, War \& Conflict}, 5(2):139--153, 2012.

\bibitem{APH:52}
A.~P. Hare.
\newblock A study of interaction and consensus in different sized groups.
\newblock {\em American Sociological Review}, 17(3):261--267, 1952.

\bibitem{RH-UK:02}
R.~Hegselmann and U.~Krause.
\newblock Opinion dynamics and bounded confidence models, analysis, and
  simulations.
\newblock {\em Journal of Artificial Societies and Social Simulation}, 5(3),
  2002.

\bibitem{ECH:13}
E.~C. Hug.
\newblock The role of isolation in radicalization: How important is it?
\newblock Master's thesis, Naval Postgraduate School Monterey CA, 2013.

\bibitem{KJ-JMB-JG-DS-PM:19}
K.~Janda, J.~M. Berry, J.~Goldman, D.~Schildkraut, and P.~Manna.
\newblock {\em The Challenge of Democracy: American Government in Global
  Politics}.
\newblock Cengage Learning US, 2019.

\bibitem{DK-AT:79}
D.~Kahneman and A.~Tversky.
\newblock Prospect theory: An analysis of decision under risk.
\newblock {\em Econometrica}, 47(2):363--391, 1979.

\bibitem{TKN-MM-JL:16}
T.~Kurahashi-Nakamura, M.~M\"{a}s, and J.~Lorenz.
\newblock Robust clustering in generalized bounded confidence models.
\newblock {\em Journal of Artificial Societies and Social Simulation}, 19(4),
  2016.

\bibitem{JL:17}
J.~Lorenz.
\newblock Modeling the evolution of ideological landscapes through opinion
  dynamics.
\newblock In {\em Advances in Social Simulation 2015}, pages 255--266.
  Springer, 2017.

\bibitem{SLP-MJG-HM-MF-MVE:15}
S.~Lyons-Padilla, M.~J. Gelfand, H.~Mirahmadi, M.~Farooq, and M.~Van Egmond.
\newblock Belonging nowhere: Marginalization \& radicalization risk among
  muslim immigrants.
\newblock {\em Behavioral Science \& Policy}, 1(2):1--12, 2015.

\bibitem{DCM-WW:05}
D.~C. Matz and W.~Wood.
\newblock Cognitive dissonance in groups: {T}he consequences of disagreement.
\newblock {\em Journal of Personality and Social Psychology}, 88(1):22--37,
  2005.

\bibitem{CM-SM:08}
C.~McCauley and S.~Moskalenko.
\newblock Mechanisms of political radicalization: Pathways toward terrorism.
\newblock {\em Terrorism and Political Violence}, 20(3):415--433, 2008.

\bibitem{SM:00}
S.~Morris.
\newblock Contagion.
\newblock {\em The Review of Economic Studies}, 67(1):57--78, 2000.

\bibitem{RP-MF-BT:19}
R.~Parasnis, M.~Franceschetti, and B.~Touri.
\newblock On graphs with bounded and unbounded convergence times in social
  hegselmann-krause dynamics.
\newblock In {\em {IEEE} Conf.\ on Decision and Control}, pages 6431--6436,
  Nice, France, 2019.

\bibitem{AVP-RT:17}
A.~V. Proskurnikov and R.~Tempo.
\newblock A tutorial on modeling and analysis of dynamic social networks. {Part
  I}.
\newblock {\em Annual Reviews in Control}, 43:65--79, 2017.

\bibitem{AVP-RT:18}
A.~V. Proskurnikov and R.~Tempo.
\newblock A tutorial on modeling and analysis of dynamic social networks. {Part
  II}.
\newblock {\em Annual Reviews in Control}, 45:166--190, 2018.

\bibitem{KS-RS:08}
K.~Sabo and R.~Scitovski.
\newblock The best least absolute deviations line--properties and two efficient
  methods for its derivation.
\newblock {\em The ANZIAM Journal}, 50(2):185--198, 2008.

\bibitem{GS-CA-JSB:19}
G.~Shi, C.~Altafini, and J.~S. Baras.
\newblock Dynamics over signed networks.
\newblock {\em SIAM Review}, 61(2):229--257, 2019.

\bibitem{ETM-RWM:14}
E.~Tsintsadze-Maass and R.~W. Maass.
\newblock Groupthink and terrorist radicalization.
\newblock {\em Terrorism and Political Violence}, 26:735--758, 2014.

\bibitem{CVK-SM-PG-PJR-JMH-VDB:16}
C.~{Vande~Kerckhove}, S.~Martin, P.~Gend, P.~J. Rentfrow, J.~M. Hendrickx, and
  V.~D. Blondel.
\newblock Modelling influence and opinion evolution in online collective
  behaviour.
\newblock {\em PLoS One}, 11(6):1--25, 06 2016.

\bibitem{DJW-SHS:98}
D.~J. Watts and S.~H. Strogatz.
\newblock Collective dynamics of `small-world' networks.
\newblock {\em Nature}, 393:440--442, 1998.

\bibitem{JW-JW-JG-TM:74}
J.~Woelfel, J.~Woelfel, J.~Gillham, and T.~McPhail.
\newblock Political radicalization as a communication process.
\newblock {\em Communication Research}, 1(3):243--263, 1974.

\bibitem{EY-DA-AO:11}
E.~Yildiz, D.~Acemoglu, and A.~Ozdaglar.
\newblock Diffusion of innovations in a stochastic linear threshold model.
\newblock In {\em {IEEE} Conf.\ on Decision and Control and European Control
  Conference}, Orlando, USA, December 2011.

\bibitem{YY-MA:01}
Y.~Yoo and M.~Alavi.
\newblock Media and group cohesion: Relative influences on social presence,
  task participation, and group consensus.
\newblock {\em MIS quarterly}, pages 371--390, 2001.

\end{thebibliography}

\textbf{Data and materials availability}: The dataset used for empirical validation in this paper are obtained from the research paper~\cite{CVK-SM-PG-PJR-JMH-VDB:16} and is available on its journal website, see\\ https://journals.plos.org/plosone/article?id=10.1371/journal.pone.0230584

\newpage

\noindent\textbf{SUPPLEMENTARY MATERIALS}
\bigskip

\setcounter{section}{0}
\renewcommand{\thesection}{S\arabic{section}}

\setcounter{equation}{0}
\renewcommand{\theequation}{S\arabic{equation}}

\setcounter{figure}{0}
\renewcommand{\thefigure}{S\arabic{figure}}

This document contains a brief review of some basic concepts in graph theory, the definition and uniqueness of weighted median, supplementary empirical results and supplementary simulations results referred to in the main text.

\section{Brief Review of Graph Theory}

Graph theory is a basic mathematical tool to model networks. Some important concepts in graph theory introduced in this section are used in the main text. A graph is a triple $G(V,E,A)$. Here $V$ denotes the set of nodes and $V = \{1, . . . , n\}$ for a network of $n$ nodes. Let $E \subseteq V \times V$ be the set of links defined as follows: $(i,j)\in E$ if there exists a link from node $i$ to node $j$. A link from node $i$ to itself is called a \emph{self loop}. For any node $i\in V$, any node $j$ with $(i,j)\in E$ is an \emph{out-neighbor} of node $i$, while any node $j$ with $(j,i)\in E$ is an \emph{in-neighbor} of node $i$. Graphs in which the links are all undirected can be considered as the graphs in which all the links are directed but bilateral. Therefore, in this document, we assume all the network links to be directed, unless specified. The graph is \emph{weighted} if a real-value weight is assigned to each link. A directed and weighted graph with $n$ nodes can be characterized by an $n\times n$ matrix $A=(a_{ij})_{n\times n}$, referred to as its \emph{adjacency matrix}. For any $i,\,j\in V$, $a_{ij}\neq 0$ if and only if there is a directed link from node $i$ to node $j$. The value of $a_{ij}$, if non-zero, denotes the weight of the link from $i$ to $j$. Since the adjacency matrix contains all the information of a graph, the graph associated with an adjacency matrix $A$ can be denoted by $G(A)$.

On a graph $G(A)$, a \emph{path} from node $i_0$ to node $i_{\ell}$ with length $\ell$ is an ordered sequence of distinct nodes $\{i_0,i_1,\dots,i_{\ell}\}$, in which $a_{i_k i_{k+1}}\neq 0$ for any $k\in \{ 0,1,\dots, \ell -1 \}$. A graph is \emph{strongly connected} if, for any $i,j\in V$, there is at least one path from $i$ to $j$. A node $i$ is a \emph{globally reachable node} if, for any $j\in V$, there exists a path from $j$ to $i$. A path from node $i$ to itself, with no repeating node except $i$, is referred to as a \emph{cycle} and the number of distinct nodes involved is called the length of the cycle. A self loop is a cycle with length 1. The greatest common divisor of the lengths of all the cycles in a graph is defined as the \emph{period} of the graph. A graph with the period equal to 1 is called \emph{aperiodic}. By definition, a graph with self loops is aperiodic.

A graph $G'(V',E')$ is a \emph{subgraph} of graph $G(V,E)$ if $V'\subseteq V$ and $E'\subseteq E\cap (V\times V)$. The subgraph $G'(V',E')$ is called an \emph{induced subgraph} of $G(V,E)$, or, equivalently, the subgraph of $G(V,E)$ \emph{induced by} $V'$, if $E'$ contains all the links in $E$ between the nodes in $V'$. A subgraph $G'$ is a \emph{strongly connected component} of $G$ if $G'$ is strongly connected and any other subgraph of $G$ strictly containing $G'$ is not strongly connected.

\section{Derivation and uniqueness of weighted median}

\subsection{Uniqueness of weighted median}
The formal definition of weighted median is given as follows:
\begin{definition}[Weighted median]\label{def:WM}
Given any $n$-tuple of real numbers $x=(x_1,\dots, x_n)$ and the associated $n$-tuple of nonnegative weights $w = (w_1,\dots,w_n)$, where $\sum_{i=1}^n w_i =1$, the \emph{weighted median} of $x$, associated with the weights $w$, is denoted by $\textup{\Med}(x;w)$ and defined as the real number $x^*\in \{x_1,\dots,x_n\}$ such that
\begin{equation*}
 \sum_{i:\,x_i<x^*} w_i \le 1/2,\quad \textup{and}\quad \sum_{i:\,x_i>x^*} w_i \le 1/2.
\end{equation*}
\end{definition}

By carefully examining this definition, one could observe that, associated with certain specific weights $w$, there might exist multiple weighted medians of $x$ satisfying the definitions above. Here we point out the following facts:
\begin{enumerate}[label={Fact~\arabic*:}]
\itemsep0em
\item The weighted median of $x$ associated with $w$ is unique if and only if there exists $x^*\in \{x_1,\dots, x_n\}$ such that
\begin{equation*}
\sum_{i:\, x_i<x^*} w_i<\frac{1}{2},\quad \sum_{i:\, x_i=x^*}w_i>0,\quad \textup{and}\quad \sum_{i:\, x_i>x^*} w_i <1/2.
\end{equation*}
In this case, $x^*$ is the unique weighted median;
\item The weighted medians of $x$ associated with $w$ are NOT unique if and only if there exists $z\in \{x_1,\dots, x_n\}$ such that $\sum_{i:\, x_i<z} w_i = \sum_{i:\, x_i\ge z} w_i = 1/2$. Among all these weighted medians of $x$, the smallest one, denoted by $\underline{x}^*$, satisfies 
\begin{equation*}
\sum_{i:\, x_i<\underline{x}^*}w_i < \frac{1}{2},\quad \sum_{i:\, x_i=\underline{x}^*}w_i >0,\quad \textup{and}\quad \sum_{i:\, x_i>\underline{x}^*}w_i = \frac{1}{2},
\end{equation*}
while the largest weighted median, denoted by $\overline{x}^*$, satisfies
\begin{equation*}
\sum_{i:\, x_i<\overline{x}^*} w_i =\frac{1}{2},\quad \sum_{i:\, x_i = \overline{x}^*} w_i >0,\quad \textup{and}\quad \sum_{i:\, x_i>\overline{x}^*}<\frac{1}{2}.
\end{equation*}
Moreover, if there exists any $\hat{x}\in \{x_1,\dots, x_n\}$ such that $\underline{x}^*<\hat{x}<\overline{x}^*$, then $\hat{x}$ is also a weighted median and it must hold that $\sum_{i:\, x_i = \hat{x}}w_i = 0$.
\end{enumerate} 
For generic weights, e.g., if $w_1,\dots,w_n$ are independently randomly generated from some continuous probability distributions, the case in Fact 2 never occurs since almost surely there does not exist any $\theta\in \{1,\dots, n\}$ such that $\sum_{i\in \theta} w_i = 1/2$. Therefore, given generic weights $w$, the weighted median of $x$ is unique.

In order to avoid unnecessary mathematical complexity, we would like to make each individual's opinion update well-defined and deterministic. Therefore, in the weighted-median opinion dynamics, we slightly change the definition of weighted median when it is not unique according to Definition~\ref{def:WM}. Consider a group of $n$ individuals discussing some certain issue. Denote by $x_i(t)$ the opinion of individual $i$ at time $t$ and let $x(t)$ be the $n$-tuple  $\big(x_1(t),\dots, x_n(t)\big)$. The interpersonal influences are characterized by the influence matrix $W=(w_{ij})_{n\times n}$, which is entry-wise non-negative and satisfies $\sum_{j=1}^n w_{ij}=1$ for any $i\in \{1,\dots,n\}$. The formal definition of weighted-median opinion dynamics is given as follows.
\begin{definition}[Weighted-median opinion dynamics]\label{def:WM-op-dyn}
Consider a group of $n$ individuals discussing on some certain issue, with the influence matrix given by $W=(w_{ij})_{n\times n}$. The weighted-median opinion dynamics is defined as the following process: At each time $t+1$, one individual $i$ is randomly picked and update their opinion according to the following equation:
\begin{equation*}
x_i(t+1) = \Med_i\big(x(t);W\big),
\end{equation*}
where $\Med_i(x(t);W)$ is the weighted median of $x(t)$ associated with the weights given by the $i$-th row of $W$, i.e., $(w_{i1},w_{i2},\dots, w_{in})$. $\Med_i\big(x(t);W\big)$ is well-defined if such a weighted-median is unique. If the weighted-median is not unique, then let $\Med_i\big(x(t);W\big)$ be the weighted median that is the closest to $x_i(t)$.
\end{definition}
This set-up guarantees the uniqueness of $\Med_i(x;W)$ since only one of the following 3 cases can occur when the weighted medians are not unique: 
\begin{enumerate}[label={\roman*)}]
\itemsep0em
\item $x_i\le \underline{x}^*$, where $\underline{x}^*$ is the smallest weighted median of $x$ associated with the weights $(w_1,\dots, w_n)$. In this case, $\Med_i(x;W)=\underline{x}^*$ is unique;
\item $x_i\ge \overline{x}^*$, where $\overline{x}^*$ is the largest weighted median of $x$ associated with the weights $(w_1,\dots, w_n)$. In this case, $\Med_i(x;W)=\overline{x}^*$ is unique;
\item $\underline{x}^*<x_i<\overline{x}^*$. According to Fact~2 for the weighted median in last paragraph, this must imply that $\sum_{j:\, x_j= x_i} w_{ij}=0$ and $x_i$ is also a weighted median of $x$ associated with the weights $(w_1,\dots, w_n)$. Therefore, in this case, $\Med_i(x;W)=x_i$ is also unique.
\end{enumerate}

Note that, if the entries of $W$ are randomly generated from some continuous distributions, then, for any subset of the links on the influence network $G(W)$, the sum of their weights is almost surely not equal to $1/2$. As a consequence, the weighted median for each individual at any time is almost surely unique. Therefore, for generic influence networks,  the weighted-median opinion dynamics defined by Definition~\ref{def:WM-op-dyn} follow a simple rule and is consistent with the formal definition of weighted median given in Definition~\ref{def:WM}.  In the rest of this article, by weighted-median opinion dynamics, or weighted-median model, we mean the dynamical system described by Definition~\ref{def:WM-op-dyn}. According to Definition~\ref{def:WM-op-dyn}, for any given initial condition $x(0) = (x_{0,1},\dots,x_{0,n})^{\top}$, the solution $x(t)$ to the weighted-median opinion dynamics satisfies $x_i(t)\in \{x_{0,1},\dots,x_{0,n}\}$ for any $i\in \{1,\dots, n\}$ and any $t\ge 0$. Moreover, according to Definition~\ref{def:WM-op-dyn}, for each node $i$, 
\begin{align*}
x_i(t+1) & >x_i(t) \quad \textup{if and only if}\quad \sum_{j:\, x_j(t)>x_i(t)} w_{ij}>1/2,\qquad \text{and}\\
x_i(t+1) & <x_i(t) \quad \textup{if and only if}\quad \sum_{j:\, x_j(t)<x_i(t)} w_{ij}>1/2.
\end{align*}

\subsection{Derivation of the weighted-median mechanism from the absolute-value cognitive dissonance function}

Consider an influence network $G(W)$ with $n$ individuals. Given the opinion vector $x$, each individual $i$'s cognitive dissonance generated by disagreeing with others can be modelled as
\begin{align*}
C_i(x_i,x_{-i}) = \sum_{j=1}^n w_{ij} |x_i-x_j|^{\alpha},
\end{align*}
and individual $i$'s opinion update can be modelled as the best response to minimize the cognitive dissonance $C_i(x_i,x_{-i})$. That is, the updated opinion of individual $i$, denoted by $x_i^+$, satisfies
\begin{equation}\label{eq:best-response}
x_i^+ = \argmin_{z\in \mathbb{R}} \sum_{j=1}^n w_{ij}|z-x_j|^{\alpha}.
\end{equation}
We use equality here in the sense that the right-hand side of the equation above is unique for generic weights $w_{ij}$'s. The following proposition states the relation between the system given by equation~\eqref{eq:best-response} and the weighted-median opinion update, when we set the value of the parameter $\alpha=1$.
\begin{proposition}[Weighted-median update as best-response dynamics]\label{prop:best-response-WM}
Given the row-stochastic influence matrix $W=(w_{ij})_{n\times n}$ and the vector $x = \big(x_1,\dots, x_n  \big)^{\top}$, the following statements holds: for any $i\in \{1,\dots,n\}$, 
\begin{enumerate}[label=\roman*)]
\itemsep0em
\item If there exists $x^*\in \{x_1,\dots, x_n\}$ such that
   \begin{equation*}
   \sum_{j:\, x_j <x^*} w_{ij} < \frac{1}{2},\quad \textup{and}\quad \sum_{j:\, x_j>x^*} w_{ij} < \frac{1}{2},
   \end{equation*}
then 
   \begin{equation*}
   \Med_i(x;W) = x^* = \argmin_z \sum_{j=1}^n w_{ij} |z-x_j|;
   \end{equation*}
\item If there does not exist such $x^*$, then the set
   \begin{equation*}
   M_i(x;W) = \Big\{ y\in \{x_1,\dots, x_n\}\,\Big|\, \sum_{j:\, x_j \le y} w_{ij} \le \frac{1}{2},\quad \sum_{j:\, x_j>y} w_{ij} \le \frac{1}{2} \Big\}
   \end{equation*}
is non-empty and 
\begin{align*}
\Med_i(x;W)  = \argmin_{y\in M_i(x;W)} |y-x_i| \in \Big[ \inf M_i(x;W),\,\, \sup M_i(x;W) \Big] = \argmin_z \sum_{j=1}^n  w_{ij} |z-x_j|.
\end{align*}
\end{enumerate}
\end{proposition}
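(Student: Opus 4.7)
The plan is to recognize the best-response problem as the classical weighted $L^1$-minimization and attack it by subdifferential calculus on the convex, piecewise-linear cost $f(z) = \sum_{j=1}^n w_{ij}|z-x_j|$. First I would compute the one-sided derivatives
\[
f'_-(z) = \sum_{j:\,x_j<z} w_{ij} - \sum_{j:\,x_j\ge z} w_{ij}, \qquad f'_+(z) = \sum_{j:\,x_j\le z} w_{ij} - \sum_{j:\,x_j>z} w_{ij},
\]
and use $\sum_j w_{ij} = 1$ together with the convex optimality condition $f'_-(z^*)\le 0\le f'_+(z^*)$ to obtain the key equivalence: $z^*$ minimizes $f$ if and only if $\sum_{j:\,x_j<z^*} w_{ij} \le 1/2$ and $\sum_{j:\,x_j>z^*} w_{ij} \le 1/2$, which is exactly the weighted-median defining condition from Definition~\ref{def:WM}.

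Because $f$ is piecewise linear with breakpoints only at the distinct values among $\{x_1,\dots,x_n\}$ and because its slope is strictly increasing across any breakpoint carrying positive weight, the argmin is either a single point in $\{x_1,\dots,x_n\}$ or a closed interval whose endpoints both lie in $\{x_1,\dots,x_n\}$. For case~(i), the strict inequalities at $x^*$ force $f'_-(x^*) < 0 < f'_+(x^*)$, so $x^*$ is the unique minimizer of $f$; combined with Fact~1 of the preceding uniqueness discussion, $x^*$ is also the unique weighted median, yielding $\Med_i(x;W) = x^* = \argmin_z f(z)$.

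For case~(ii), Fact~2 supplies a breakpoint at which the cumulative weight equals exactly $1/2$; between this breakpoint and the next distinct opinion value, $f$ has zero slope, so the set of minimizers is a non-degenerate closed interval $[\underline{x}^*, \overline{x}^*]$ whose endpoints are respectively the smallest and largest weighted medians in $\{x_1,\dots,x_n\}$. I would then verify that $M_i(x;W)$ picks up precisely these two extreme weighted medians, so that $\inf M_i(x;W) = \underline{x}^*$ and $\sup M_i(x;W) = \overline{x}^*$, and invoke the tie-breaking convention in Definition~\ref{def:WM-op-dyn} to conclude that $\Med_i(x;W)$ is the weighted median closest to $x_i$, which is precisely $\argmin_{y \in M_i(x;W)} |y-x_i|$.

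The main obstacle is case~(ii): I must align the discrete set $M_i(x;W)$ with the continuous argmin interval, checking that the defining inequalities of $M_i(x;W)$ really include both endpoints $\underline{x}^*$ and $\overline{x}^*$ so that $[\inf M_i(x;W), \sup M_i(x;W)]$ coincides with $\argmin_z f(z)$, and confirming that the tie-breaking rule is consistent with $\argmin_{y \in M_i(x;W)} |y - x_i|$ regardless of whether $x_i$ lies inside, outside, or at an endpoint of the argmin interval. Everything else reduces to routine bookkeeping with one-sided derivatives of convex piecewise-linear functions.
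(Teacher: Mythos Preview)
Your approach is correct and, in fact, furnishes a complete self-contained proof. The paper, by contrast, does not prove the proposition at all: it simply cites Lemma~3.1 of Sabo and Scitovski~\cite{KS-RS:08} together with Definition~\ref{def:WM} and declares the result a ``straightforward consequence.'' Your one-sided-derivative argument on the convex piecewise-linear cost is precisely the standard route to that external lemma, so in substance you are supplying the proof that the paper outsources. The only practical difference is that your write-up will stand on its own, whereas the paper's version requires the reader to chase the reference.

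One caution for case~(ii): the defining condition of $M_i(x;W)$ in the statement uses $\sum_{j:\,x_j\le y} w_{ij}\le 1/2$ rather than the strict ``$<y$'' appearing in Definition~\ref{def:WM}. When you carry out the verification that $\inf M_i(x;W)=\underline{x}^*$ and $\sup M_i(x;W)=\overline{x}^*$, check carefully whether $\overline{x}^*$ actually satisfies $\sum_{j:\,x_j\le \overline{x}^*} w_{ij}\le 1/2$; by Fact~2 this sum equals $1/2$ plus the positive weight at $\overline{x}^*$, so the inequality as literally written may fail at the upper endpoint. This is almost certainly a typo in the statement (the intended condition is the weighted-median one, $\sum_{j:\,x_j<y}w_{ij}\le 1/2$), and your plan to ``align the discrete set $M_i(x;W)$ with the continuous argmin interval'' will expose it. Flag the discrepancy rather than trying to force the argument through the stated condition.
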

This proposition is a straightforward consequence of Definition~\ref{def:WM} in this document and Lemma~3.1 in the paper by Sabo et al.~\cite{KS-RS:08}.

\section{Supplementary empirical results}

In this section, we provide some supplementary empirical results on the analysis of the online experiment dataset published in the paper by Kerckhove et al.~\cite{CVK-SM-PG-PJR-JMH-VDB:16} and described in Section 2.2 of the main text. As mentioned in the main text, we predict the participants' answers at each round using the following hypotheses:
\begin{align*}
\textup{Hypo. 1 (median):}\quad  x_i(t+1) & =\textup{Median}\big(x(t)\big);\\
\textup{Hypo. 2 (average):}\quad  x_i(t+1) & =\textup{Average}\big(x(t)\big);\\
\textup{Hypo. 3 (median with inertia):}\quad  x_i(t+1) & =\gamma_i(t) x_i(t) + (1-\gamma_i(t))\textup{Median}\big(x(t)\big);\\
\textup{Hypo. 4 (average with inertia):}\quad  x_i(t+1) & =\beta_i(t) x_i(t) + (1-\beta_i(t))\textup{Average}\big(x(t)\big);\\
\textup{Hypo. 5 (median with prejudice):}\quad  x_i(t+1) & =\tilde{\gamma}_i(t) x_i(1) + (1-\tilde{\gamma}_i(t))\textup{Median}\big(x(t)\big);\\
\textup{Hypo. 6 (average with prejudice):}\quad   x_i(t+1) & =\tilde{\beta}_i(t) x_i(1) + (1-\tilde{\beta}_i(t))\textup{Average}\big(x(t)\big).
\end{align*}
The meanings of the notations and parameters in the above equations are explained in the main text. Regarding the predictions of the opinion shifts from the first round to the second round, Hypotheses 5 and 6 are equivalent to Hypotheses 3 and 4 respectively. The histograms of the errors rates of the predictions by different hypotheses on the individuals' opinions at the second (third respectively) rounds are presented in Panel~(a) (Panel~(b) respectively) of Supplementary Fig.~\ref{fig:complete-empirical-analysis}.  Some indicative statistics on the error rates of the predictions of the 2nd-round opinions by different hypotheses are presented in Panel~(c) of Supplementary Fig.~\ref{fig:complete-empirical-analysis}. Regarding the predictions of opinion shifts from the 2nd round to the 3rd round, the data analysis results are provided in Panel~(d). 

\begin{figure}
\centering
\includegraphics[width=0.99\linewidth]{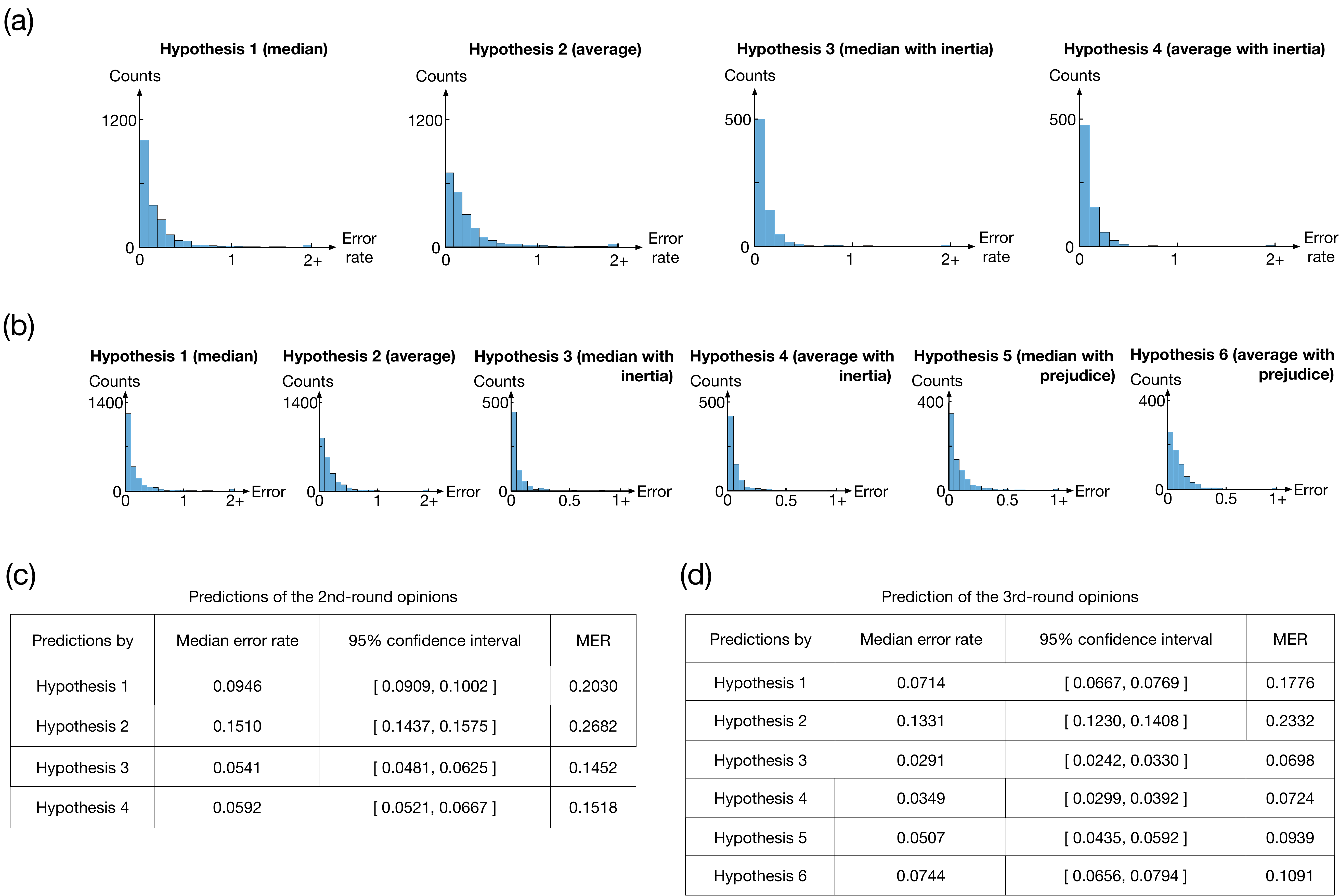}
\caption{Empirical analysis results for the dataset collected in an online human-subject experiment~\cite{CVK-SM-PG-PJR-JMH-VDB:16}. Here Hypothesis 1-6 correspond to median, average, median with inertia, average with inertia, median with prejudice, and average with prejudice, respectively. In the $x$-axis of the plots in Panel~(a) and (b), ``2+'' means ``larger than or equal to 2''. The acronym ``MAE'' in these tables is short for ``mean absolute-value error'' and ``MER'' is short for ``mean error rate''.}
\label{fig:complete-empirical-analysis}
\end{figure} 

\section{Supplementary simulation results}

\subsection{Consensus probability} 

In this main text, when we investigate the effect of network size on consensus probability, we fix the rewiring probability $\beta$ the small-world networks as $\beta=1$. The results presented in Fig.4(a)-(c) in the main text are robust to the value of $\beta$, e.g., see Supplementary Fig.~\ref{fig:SI-consensus_prob_robust}(a) for the qualitatively similar results when $\beta$ is set to be 0.3, obtained under the same simulation set-up as in the main text. Similarly, although the value of $n$ is set to be 30 in Fig.~4(d)-(f) of the main text when we investigate the effect of the rewiring probability $\beta$, the results presented there are robust to the value of $n$, e.g., see Supplementary Fig.~\ref{fig:SI-consensus_prob_robust}(b) for the qualitatively similar results when $n$ is set to be 20, obtained under the same simulation set-up as in the main text. 

\begin{figure}[htb]
\begin{center}
\includegraphics[width=0.6\linewidth]{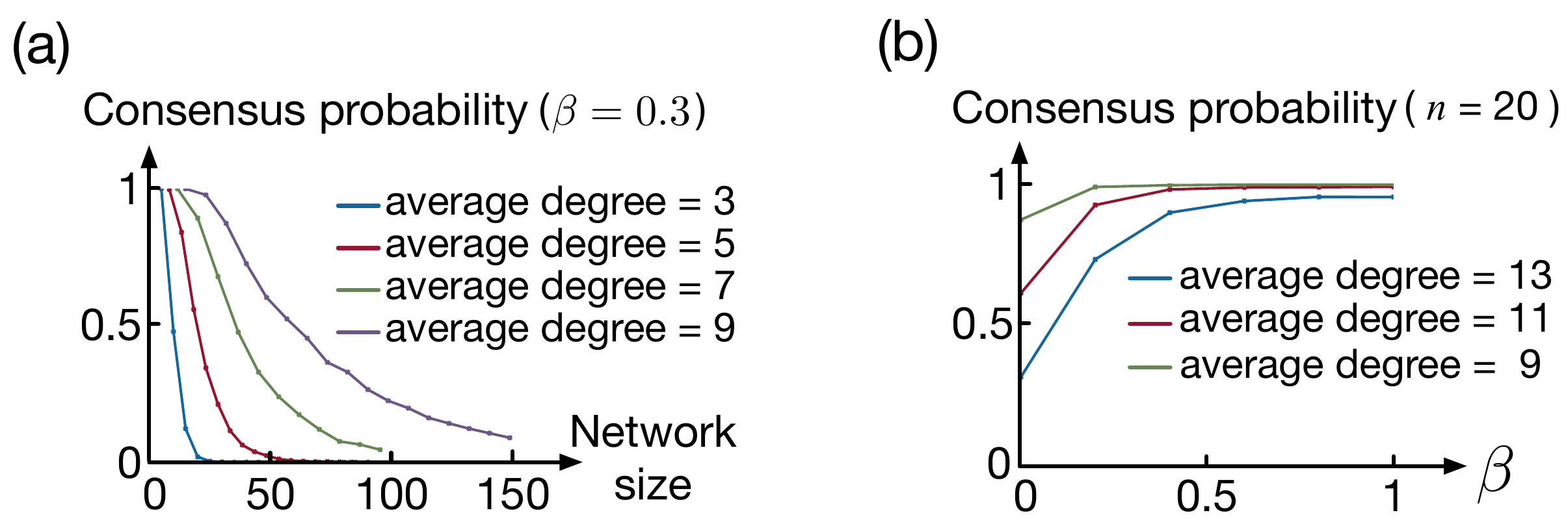}
\caption{The weighted-median model's predictions on how consensus probability depends on network size and clustering coefficient. In Panel~(a), we plot how the consensus probability changes with $n$, for fixed values of $d$ and $\beta$ ($\beta$ is set to be 0.3). In Panel~(b), we plot how the consensus probability changes with $\beta$ for fixed values of $n$ and $d$ ($n$ is set to be 20). All the probabilities are estimated over 5000 independent simulations. 
}
\label{fig:SI-consensus_prob_robust}
\end{center}
\end{figure} 

\subsection{Distribution of extreme opinions}

In Section 2.3 of the main text, we investigate the in-degree distributions for different categories of opinions. With the same simulation set-up as in the main text, we obtain qualitatively similar results if the in-degree centrality is replaced by the closeness centrality or the betweenness centrality. However, results for the eigenvector centrality do not reveal any clear pattern, see Supplmentary Fig.~\ref{fig:centrality-dist-extremeness-all}.

\begin{figure*}[htb]
\begin{center}
\includegraphics[width=0.9\linewidth]{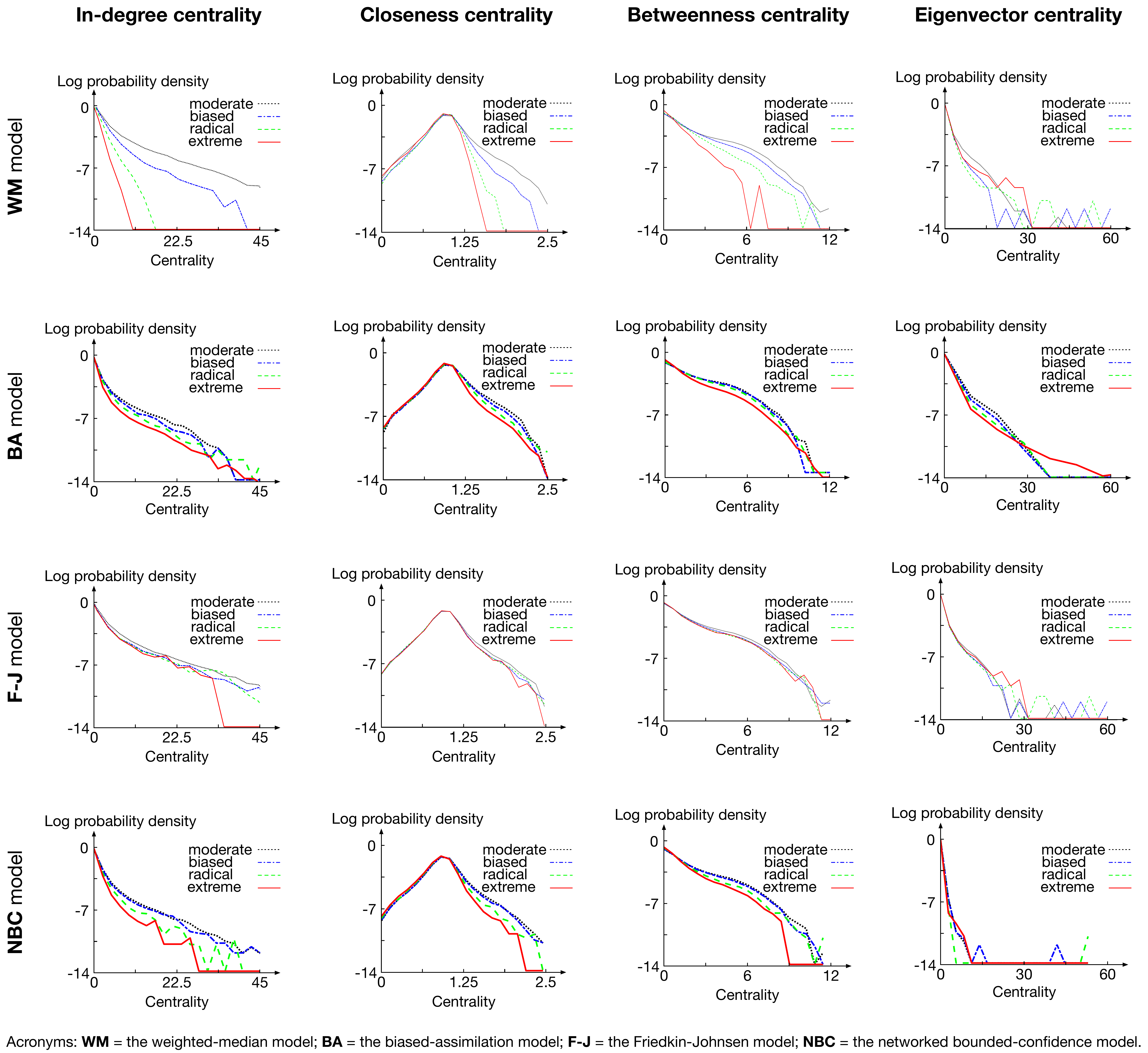}
\caption{Centrality distributions for moderate, biased, radical and extreme final opinions predicted by different models. The distributions are presented in the form of log probability density. Here the initial opinions be randomly generated from the uniform distribution $\textup{Unif}\,[-1,1]$ and classify the opinions into four categories: the \emph{moderate} opinions correspond to those in the interval $[-0.25,0.25]$; the \emph{biased} opinions correspond to those in $[-0.5,-0.25)\cup (0.25,0.5]$; the \emph{radical} opinions correspond to those in $[-0.75,-0.5)\cup (0.5,0.75]$; the \emph{extreme} opinions correspond to those in $[-1,-0.75)\cup (0.75,1]$.}
\label{fig:centrality-dist-extremeness-all}
\end{center}
\end{figure*} 

For the results shown in Fig.~5 of the main text, the opinion dynamics models in comparison are simulated on networks with self loops. Simulations with the same set-up but on networks without self loops lead to qualitatively similar results, see Supplementary Fig.~\ref{fig:centrality-dist-extremeness-no-self-loop}.
\begin{figure*}[htb]
\begin{center}
\includegraphics[width=0.99\linewidth]{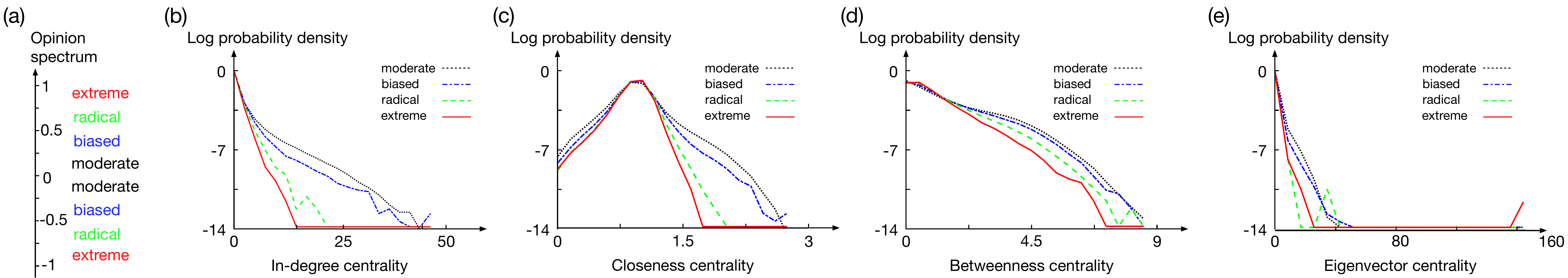}
\caption{Centrality distributions for moderate, biased, radical and extreme final opinions predicted by the weighted-median model, on a scale-free network with no self loop. The distributions are presented in the form of log probability density. The opinion spectrum is given by Panel~(a). Panels~(b)-(d) show the log probability distributions in terms of different measures of centrality.}
\label{fig:centrality-dist-extremeness-no-self-loop}
\end{center}
\end{figure*}   

To test whether the predictions by the weighted-median model on the distribution of extreme opinions, shown in Fig.~5 of the main text, are robust to the criteria of ``extreme'' opinions, we adopt two alternative classifications of opinions and repeat the simulations on the centrality distributions for different categories of opinions, as well as the extremist-focus-indegree distributions, for the weighted-median model. Qualitatively similar results are obtained, see Supplementary Fig.~\ref{fig:extreme-peripheral-alternative-criteria}.

\begin{figure}
\centering
\includegraphics[width=0.99\linewidth]{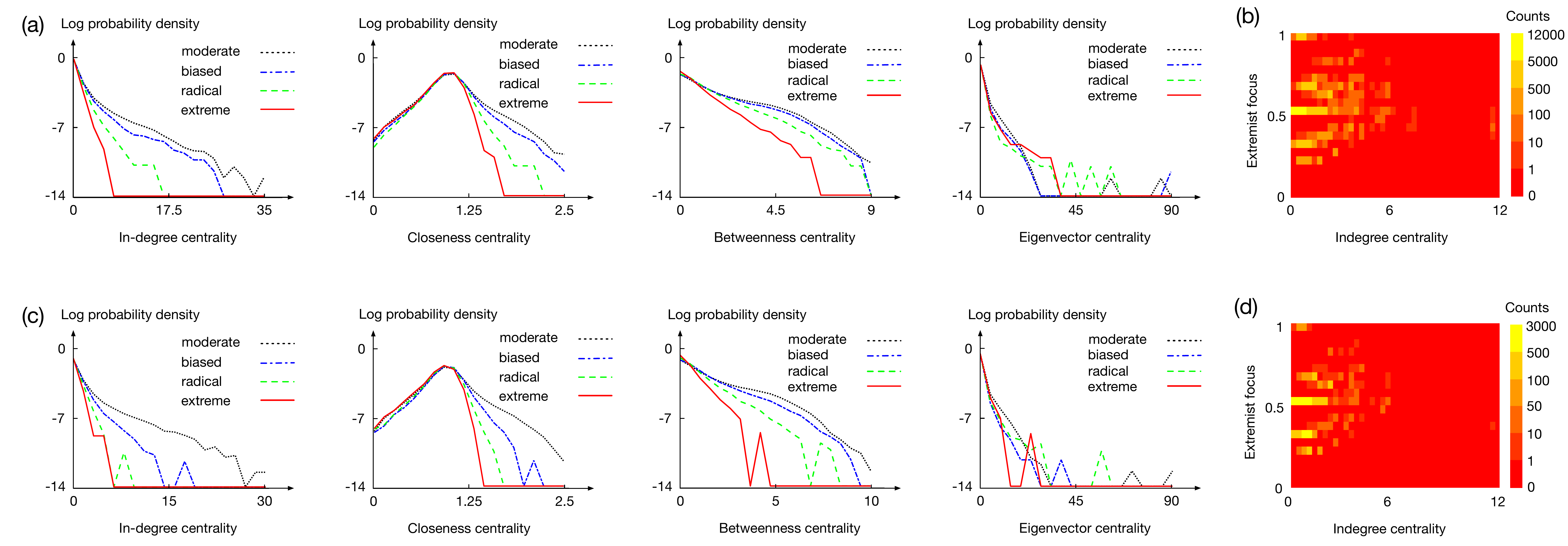}
\caption{Predictions by the weighted-median model on the distribution of extreme opinions, with different classifications of opinions. Panels~(a) and~(b) correspond to the following criteria: moderate ($[-0.2,0.2]$), biased ($[-0.4,-0.2)\cup (0.2,0.4]$), radical ($[-0.7,-0.4)\cup (0.4,0.7]$), extreme ($[-1,-0.7)\cup (0.7,1]$). Panels~(c) and~(d) correspond to the following criteria: moderate ($[-0.3,0.3]$), biased ($[-0.6,-0.3)\cup (0.3,0.6]$), radical ($[-0.9,-0.6)\cup (0.6,0.9]$), extreme ($[-1,-0.9)\cup (0.9,1]$). Panels~(a) and~(c) show the centrality distributions of different categories of opinions at the final steady states, while Panels~(b) and~(d) show the two-dimensional distributions over the extremist-focus and the in-degree centrality for the etreme opinion holders at the final steady states. }\label{fig:extreme-peripheral-alternative-criteria}
\end{figure}

\section{Steady final opinion distributions}

Fig.~6 in the main text shows the predictions on the final opinion distributed by different opinion dynamics models simulated on a scale-free network with 5000 nodes. Simulations on small-world networks with 5000 nodes lead to similar results and are presented in Supplementary Fig.~\ref{fig:initial-final-WS}.

\begin{figure*}[htb]
\centering
\includegraphics[width=0.99\linewidth]{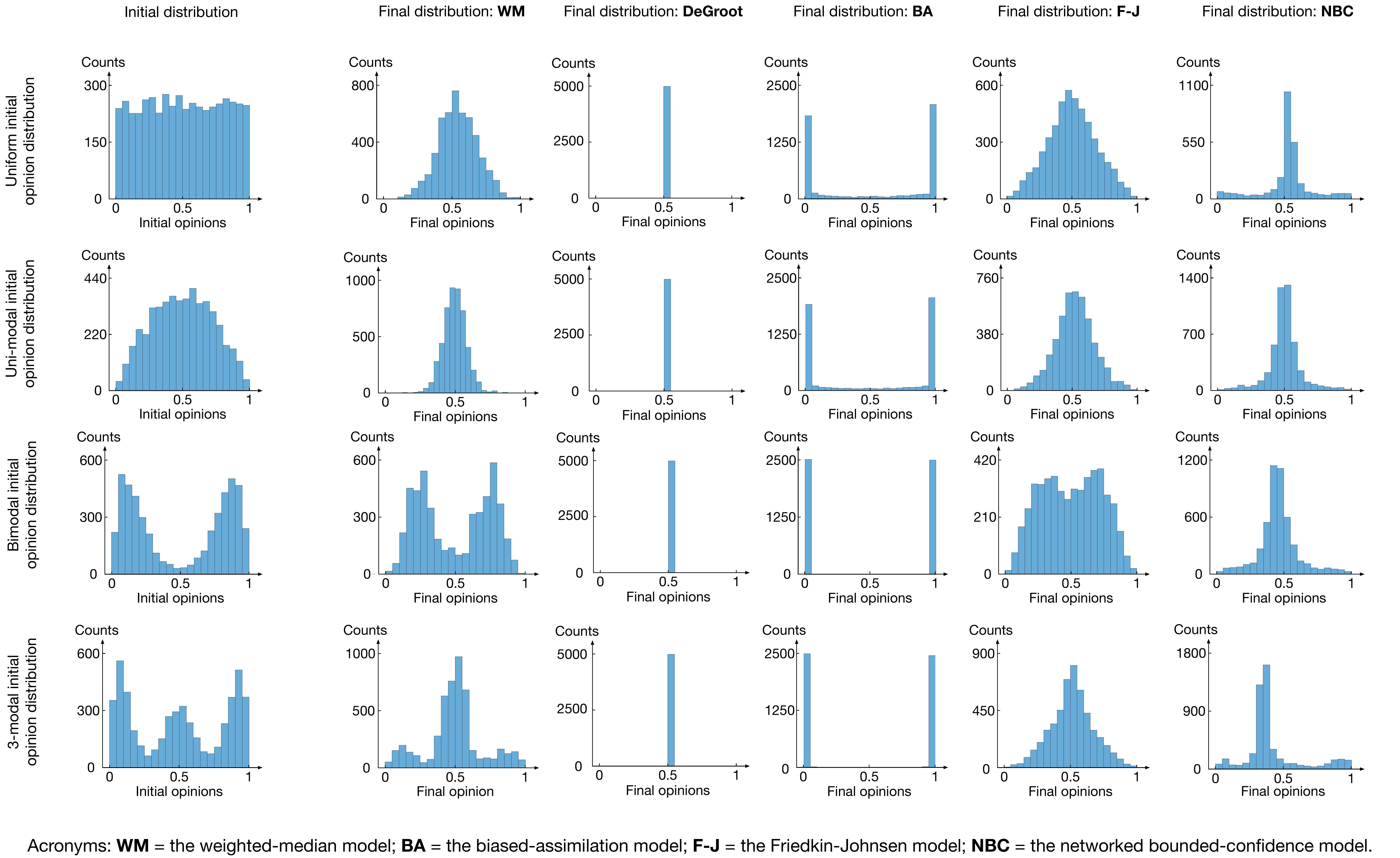}
\caption{Distributions of the initial opinions and the final opinions predicted by different models. The
simulations are run on the same small-world network with
5000 nodes.}\label{fig:initial-final-WS}
\end{figure*}

\end{document}